\newcommand{\game}{\mathcal{G}_\phi}
\newcommand{\NE}{Nash equilibrium\xspace}
\newcommand{\NEs}{Nash equilibria\xspace}
\newcommand{\ene}{$\exists${\sc \small NE}\xspace}
\newcommand{\ane}{$\forall${\sc \small NE}\xspace}
\newcommand{\NP}{{NP}\xspace}
\newcommand{\coNP}{{coNP}\xspace}
\newcommand{\calO}{\mathcal{O}}
\newcommand{\neG}{\text{NE}(\mathcal{G})}
\newcommand{\SP}{\mathit{SP}}
\def\single{singleton\xspace}
 \newtheorem{theorem}{Theorem}
 \newtheorem{lemma}{Lemma}
 \newtheorem{example}{Example}
\newcounter{symbol}
\newcommand{\indexsyma}[1]%
{\stepcounter{symbol}\index{zzz1 \thesymbol @\protect#1}}
\newcommand{\indexsymb}[1]%
{\stepcounter{symbol}\index{zzz2 \thesymbol @\protect#1}}
\newcommand{\indexsymc}[1]%
{\stepcounter{symbol}\index{zzz3 \thesymbol @\protect#1}}
\newcommand{\indexsymd}[1]%
{\stepcounter{symbol}\index{zzz4 \thesymbol @\protect#1}}
\newcommand{\indexsyme}[1]%
{\stepcounter{symbol}\index{zzz5 \thesymbol @\protect#1}}
\newcommand{\bfe}[1]{\begin{bfseries}\emph{#1}\end{bfseries}\index{#1}}
\newcommand{\myra}{\mbox{$\:\rightarrow\:$}}
\newcommand{\LL}{\mbox{$\ldots$}}
\newcommand{\szkew}[1]{\relax \setbox0=\hbox{\kern -24pt $\displaystyle#1$\kern 0pt }%
\box0}
{\catcode`\@=11 \global\let\ifjusthvtest@=\iffalse}
\newcounter{oldmycaption}
\newcommand{\wgt}[2]{{w_{#1 \to #2}}}
\newcommand{\strprofile}{s}
\newcommand{\citey}[1]{(\citeyear{#1})\xspace}
\newcommand{\qed}{\hfill$\Box$}
\title{Constrained Pure Nash Equilibria in Polymatrix Games}
\author{Sunil Simon\thanks{Supported by the Research-I Foundation, IIT Kanpur and the Liverpool-India fellowship,
University of Liverpool.}\\ IIT Kanpur\\Kanpur, India\\
\And Dominik Wojtczak\thanks{Supported by EPSRC grant
EP/M027651/1.}\\University of Liverpool\\
Liverpool, U.K. \\
}
\begin{document}
\maketitle

\begin{abstract}
We study the problem of checking for the existence of constrained pure Nash
equilibria in a subclass of polymatrix games defined on weighted
directed graphs. The payoff of a player is defined as the sum of
nonnegative rational weights on incoming edges from players who picked the same
strategy augmented by a fixed integer bonus for picking a given
strategy.  These games capture the idea of coordination within a local
neighbourhood in the absence of globally common strategies.  We study
the decision problem of checking whether a given set of
strategy choices for a subset of the players is consistent with some
pure Nash equilibrium or, alternatively, with all pure Nash
equilibria. We identify the most natural
tractable cases and show \NP or \coNP-completness of these problems
already for unweighted DAGs.
\end{abstract}
\date{}

\section{Introduction}
\label{sec:intro}

Identifying subclasses of games where equilibria is tractable is an
important problem in algorithmic analysis of multiplayer games. Pure
Nash equilibria (NEs) may not exist in games and checking whether a
game has a pure NE is in general a hard problem.  Even for subclasses
of games in which a pure NE is guaranteed to exists (for instance,
potential games) computing one remains PLS-hard
\cite{fabrikant_complexity_2004}.  Although, Nash's theorem guarantees
the existence of mixed strategy NE in all finite games, computing one
is still a hard problem. %
Therefore, identifying restricted classes of
games where equilibrium computation is tractable and also precisely
identifying the borderline between tractability and hardness in such
restricted classes is of obvious interest. In this paper, we study the
borderline of tractability in a natural subclass of games where the
utilities of players are restricted to be pairwise separable. These
are called \textit{polymatrix games} \cite{Jan68} and they form an
abstract model that is useful to analyse strategic behaviour of
players in games formed via pairwise interactions.
In polymatrix games, the payoff for each player is the sum of the
payoffs he gets from individual two player games he plays against
every other player. 
Polymatrix games are well-studied in the literature and include game classes
with good computational properties like the two-player zero-sum
games. They also have applications in areas such as artificial neural
networks \cite{MZ91} and machine learning \cite{EP12}.

In terms of tractability, the restriction to pairwise interactions does
not immediately ensure the existence of efficient
algorithms. Computing a mixed strategy Nash equilibrium remains
PPAD-complete \cite{CD11} and checking for the existence of a pure NE 
is NP-complete in general. This motivates the need to
further analyse the type of pairwise interactions that would ensure tractability.
In this paper, we argue that another important factor which influences
tractability is the structure of the underlying interaction graph and
presence of individual preferences (that we call {\em bonuses}).

The main restriction that we impose on polymatrix games is that each
pairwise interaction form a coordination game. Henceforth, we will
refer to these games simply as \textit{coordination games on
  graphs}. Coordination games are often used in game theory to model
situations where players attain maximum payoff when they agree on a
common strategy. The game model that we study, extends coordination
games to the network setting where payoffs need not always be
symmetric and players coordinate within a certain local
neighbourhood. The neighbourhood structure is specified by a finite
\textit{directed} graph whose nodes correspond to the players. Each
player chooses a colour from a set of available colours.  The payoff
of a player is the sum of weights on the edges from players who choose
the same colour and a fixed bonus for picking that particular colour.
This game model is closely related to various well-studied classes of
games. For instance, coordination games on graphs are
\textit{graphical games} \cite{KLS01} and they are also related to
\textit{hedonic games} \cite{DG80,BJ02}. In hedonic games, the payoff
of each player depends solely on the set of players that selected the
same strategy.
The coalition formation property inherent to coordination games on
graphs make the game model relevant to \textit{cluster analysis}. 
The problem of clustering has been studied from a game theoretic
perspective for instance in \cite{feldman2012hedonic,PB14}. 
Feldman and Friedler \citey{FF15} introduced a framework for the analysis of clustering games on networks
where the underlying coalition formation graph is
undirected and, 
as a result, a potential game. 
Hoefer \citey{Hoefer2007} also
studied clustering games that are polymatrix games based on undirected
graphs where each player has the same set of strategies.
These games are also potential games.
Coordination games on graphs constitute a game model which can be
useful for analysing the adoption of a product or service 
within a network of agents 
interacting with each other in their local neighbourhoods.
For example, consider the selection of a mobile
phone operator. %
The interaction between users can be represented by
a coordination game where the weight of the edge
from $i$ to $j$ represents the total cost of calls from $j$ to $i$.
Also, the bonus function can represent individual preferences of users over the providers.
Now suppose that mobile network operators allow free calls among its
users. Then each mobile phone user faces a strategic choice of
picking an operator
that maximises his
cost savings
or, in the case of unweighted graphs, 
maximises the number
of people he can call for free. 
If players are allowed to freely switch their operator based on their
friends' choices, then the stable market states correspond to pure
\NEs in this game.  One can observe similar interactions in
peer-to-peer networks, social networks and photo sharing platforms.

A similar game model based on \textit{undirected} graphs was
introduced in \cite{ARSS14} and further studied in \cite{RS15}.  The
transition from undirected to directed graphs drastically changes the
status of the games. For instance, in the case of undirected graphs,
coordination games are potential games whereas in the directed case,
Nash equilibria may not even exist. Moreover, the problem of
determining the existence of pure NEs is NP-complete for
coordination games on directed graphs \cite{ASW15}. 
However, pure NE always exists for several natural classes of graphs \cite{SW16}.

However, in many practical situations, finding just one pure \NE may not be
enough.  In fact, there can be exponentially many \NEs, each with a
different payoff to each player (see Example \ref{ex:many-NEs}).
Ideally, we would like to ask for the existence of a \NE satisfying
some given constraints.  In this paper, we focus on checking whether a
partial strategy profile (i.e.  strategy choices for a subset of the
players) is consistent with some pure \NE or, alternatively, with all
pure Nash equilibria.  We will refer to these as \ene and \ane
decision problem, respectively.  We identify the most natural
tractable cases and show \NP or \coNP-completness of these problems
already for unweighted DAGs.

\smallskip\noindent{\bf Related work.} 
The complexity of checking for the existence of pure Nash equilibria
in a game crucially depends on the representation of the game. \textit{Normal form} representation can be exponential
in the number of players whereas graphical
games and polymatrix games provide a more concise representation of
strategic form games.
While checking for the existence of pure Nash equilibria can be solved
in {\sc LogSpace} for games in normal form, it is \NP-complete for
graphical games even when the payoff of each player depends only on
the strategy choices of at most three other players
\cite{gottlob_pure_2005}.  On the other hand, it is solvable in
polynomial time for graphical games whose dependency graph has a
bounded treewidth \cite{gottlob_pure_2005} or when each player has
only two possible strategies \cite{thomas_pure_2015}. For polymatrix
games, checking for the existence of a pure \NE is \NP-complete even
when all its individual 2-player games are win-loss ones \cite{ASW15}.

Gilboa and Zelmel \citey{gilboa_nash_1989} were the first to study the
computational complexity of decision problems for mixed \NEs with
additional constraints for two player games in normal form.  For many
natural constraints the corresponding decision problems were shown to
be \NP-hard. Further hardness results were shown
in \cite{conitzer_new_2008} and \cite{bilo_complexity_2012}.
The existence of constrained pure NE can be solved in {\sc LogSpace}
for normal form games simply by checking every pure strategy
profile. For graphical games the problem is {\sc NP}-hard even without
any constraints \cite{gottlob_pure_2005}, 
but because of the special structure of our games, 
this result does not directly apply in our setting.
On the other hand, constrained pure NE can be found in
polynomial time for graphical games played on graphs with a bounded
treewidth \cite{greco_complexity_2009}.  
We are not aware of any prior work on this problem for polymatrix games.
Our paper is the first to identify several subclasses of polymatrix
games for which the existence problem of a constrainted \NE 
is tractable.

\section{Background}
\label{sec:prelim}

A \bfe{strategic game} $\mathcal{G}=(S_1, \ldots, S_n,$ $p_1, \ldots,
p_n)$ with $n > 1$ players consists of a non-empty set $S_i$ of
\bfe{strategies} and a \bfe{payoff function} $p_i : S_1 \times \cdots
\times S_n \myra \mathbb{R}$, for each player $i \in \{1,2,\ldots,n\}$. 
Let $S := S_1 \times \cdots \times S_n$ and let us call each element
$s \in S$ a \bfe{joint strategy}. Given a joint strategy $s$, we
denote by $s(i)$ the strategy of player $i$ in $s$. We abbreviate the
sequence $(s(j))_{j \neq i}$ to $s_{-i}$ and occasionally write $(s(i),
s_{-i})$ instead of $s$. 
We call a strategy $s(i)$ of player $i$ a \bfe{best
response} to a joint strategy $s_{-i}$ of his opponents if for all $
x \in S_i$, $p_i(s(i), s_{-i}) \geq p_i(x, s_{-i})$.
We do not consider mixed strategies in this paper.

Given two joint strategies $s'$ and $s$, we say
that $s'$ is a \bfe{deviation of the player $i$} from $s$ 
if $s_{-i} = s'_{-i}$ and $s(i) \neq s'(i)$.
If in addition $p_i(s') > p_i(s)$, we say
that the deviation $s'$ from $s$ is \bfe{profitable} for player $i$.
We call a joint strategy $s$ a (pure) \bfe{Nash equilibrium}
if no player can profitably deviate from $s$. 
For any given strategic game $\mathcal{G}$, let $\neG$ denote the set
of all (pure) \NEs in $\mathcal{G}$.

\label{sec:model}

We now introduce the class of games we are interested in.  Fix a
finite set of colours $M$.
A weighted directed graph $(G,w)$ is a
structure where $G=(V,E)$ is a graph without self loops over the
vertices $V=\{1,\ldots,n\}$ and $w$ is a function that associates with
each edge $e \in E$, a nonnegative rational weight $w_e \in \mathbb{Q}_{\geq 0}$.
We say that a node $j$ is a \bfe{successor} of the node $i$, and $i$
is a \bfe{predecessor} of $j$, if there is an edge $i \to j$ in $E$.
Let $N_i$ denote the set of all predecessors of node $i$ in the graph
$G$.  By a \bfe{colour assignment} we mean a function that assigns to
each node of $G$ a finite non-empty set of colours.  A \bfe{bonus} is
a function $\beta$ that to each node $i$ and a colour $c$ assigns an
integer $\beta(i,c)$.

Given a weighted graph $(G,w)$, a colour assignment $C : V \to
2^M\setminus\{{\emptyset}\}$ and a bonus function $\beta : V \times M
\to \mathbb{Z}$, a strategic game $\mathcal{G}(G,w,C,\beta)$ is
defined as follows:
\begin{itemize}
\itemsep0em 
\item the players are the nodes;
\item the set of strategies of player (node) $i$ is the set of colours
  $C(i)$; 
\item the payoff function $p_i(s) := \sum_{j \in N_i : \, s(i) = s(j)} w_{j
  \to i} + \beta(i,s(i))$.
\end{itemize}

So each node simultaneously chooses a colour and its payoff is the sum
of the weights of the edges from its neighbours that chose the same
colour augmented by a bonus to the node from choosing this colour.  We
call these games \bfe{coordination games on directed graphs}, from now
on just \bfe{coordination games}.  When the weights of all the edges
are 1, we obtain a coordination game whose underlying graph is
unweighted.  In this case, we simply drop the function $w$ from the
description of the game.
In this case the payoff function is defined by $p_i(s) :=
|\{j \in N_i \mid s_i = s_j\}| + \beta(i,s(i))$.
Similarly if all the
bonuses are $0$, we obtain a coordination game without
bonuses. Likewise, to denote this game we omit the function
$\beta$. 
Note that an edge with positive integer weight $w$ can be simulated by adding $w$ nodes and 
$2w$ unweighted edges to the game, and any positive integer bonus can be simulated similarly.
However, if all weights and bonuses are represented in binary, as we assume in this paper,
such an operation can increase the size of the graph exponentially and 
be inefficient.
\begin{figure}
\centering
\tikzstyle{agent}=[circle,draw=black!80,thick, minimum size=2em,scale=0.8]
\begin{tikzpicture}[auto,>=latex',shorten >=1pt,on grid]
\newdimen\R
\R=1.3cm
\newcommand{\llab}[1]{{\small $\{#1\}$}}
\draw (90: \R) node[agent,label=right:{\llab{a,\underline{b}}}] (1) {1};
\draw (90-120: \R) node[agent,label=right:{\llab{a,\underline{c}}}] (2) {2};
\draw (90-240: \R) node[agent,label=left:{\llab{b,\underline{c}}}] (3) {3};
\draw (30: \R) node[agent,label=right:\llab{a,\underline{b}}] (4) {4};
\draw (30-120: \R) node[agent,label=right:{\llab{a,\underline{c}}}] (5) {5};
\draw (30-240: \R) node[agent,label=left:{\llab{b,\underline{c}}}] (6) {6};
\draw (90: 1.7*\R) node[agent,label=right:{\llab{\underline{a}}}] (7) {7};
\draw (90-120: 2*\R) node[agent,label=right:{\llab{\underline{c}}}] (8) {8};
\draw (90-240: 2*\R) node[agent,label=left:{\llab{\underline{b}}}] (9) {9};
\foreach \x/\y in {1/2,2/3,3/1,1/4,4/2,2/5,5/3,3/6,6/1,7/1,8/2,9/3} {
    \draw[->] (\x) to (\y);    
}
\end{tikzpicture}
\caption{Unweighted coordination game with no NE. %
\label{fig:graph}
\vspace*{-0.5em}
}
\end{figure}
\begin{example} \label{exa:payoff}
\rm
Consider the unweighted directed graph and the colour assignment
depicted in Figure~\ref{fig:graph}.
Take the joint strategy $s$ that consists of the underlined strategies.
Then the payoffs are as follows:
{\bf 0} for the nodes 1, 7, 8, and 9; 
{\bf 1} for the nodes 2, 4, 5, and 6; 
{\bf 2} for the node 3.

Note that $s$ is not a Nash equilibrium. For example,
node 1 can profitably deviate to colour $a$.
In fact the coordination game associated with this graph does not
have a Nash equilibrium.
Note that for nodes 7, 8 and 9 the only
option is to select the unique strategy in its strategy set. The best
response for nodes 4, 5 and 6 is to always select the same strategy as
nodes 1, 2 and 3, respectively. Therefore, to show that the game does
not have a Nash equilibrium, it suffices to consider the strategies of
nodes 1, 2 and 3. We denote this by the triple
$(\strprofile_1,\strprofile_2,\strprofile_3)$. Below we list all such
joint strategies and we underline a strategy that is not a best
response to the choice of other players: $(\underline{a},a,b)$,
$(a,a,\underline{c})$, $(a,c,\underline{b})$,
$(a,\underline{c},c)$, $(b,\underline{a},b)$,
$(\underline{b},a,c)$, $(b,c,\underline{b})$ and
$(\underline{b},c,c)$.  
\qed
\end{example}

Let $Q \subseteq V$ be a nonempty subset of all the nodes of a given graph
$G$. A \bfe{query} is a function $q: Q \to M$ which satisfies the
following property: for all $i \in Q$, $q(i) \in C(i)$. 
We say that a query $q$ is \bfe{consistent} with a strategy profile $s$ iff
$q = s|_{Q}$, i.e. $q(i) = s(i)$ for all $i \in Q$.
We call a query $q:Q \to M$ \bfe{monochromatic} if for all $i, j \in
Q$, $q(i) = q(j)$ and otherwise we call the query \bfe{polychromatic}.
A query $q$ is said to be \bfe{\single} if $|Q| = 1$. Obviously
every \single query is also a monochromatic one.
In this
paper, we study the following decision questions. 

{\itshape
\smallskip\noindent
Given a graph $G=(V,E)$, weights $w$, 
colour assignment $C$, bonus function $\beta$, and query $q$.
\vspace*{-.1em}

\smallskip
\noindent {\em(\ene problem)} Is there a Nash equilibrium in $\mathcal{G}(G,w,C,\beta)$ that is consistent with $q$?

\smallskip
\noindent {\em(\ane problem)} Is every Nash equilibrium in $\mathcal{G}(G,w,C,\beta)$ consistent with $q$?
}

\smallskip
\noindent Formally, \ene problem asks if there exists $s \in \neG$ such that $q
= s|_{Q}$, while the
\ane problem asks whether for all $s \in \neG$ it is the case that $q = s|_{Q}$.
Note that \ane is not a complement of \ene. Actually, any
non-\single \ane query can be reduced to a series of
\single \ane queries $q|_{\{i\}}$ for every player $i \in Q$. 
Note that trivially \ene $\in$ \NP and \ane~$\in$~\coNP, because checking whether 
a joint strategy is a \NE and is consistent with $q$ can be done in polynomial time.

\begin{table}
\setlength{\tabcolsep}{0.05cm}
\renewcommand{\arraystretch}{1.2}
\rowcolors{2}{white}{gray!20}
\resizebox{\columnwidth}{!}{%
\begin{tabular}{ccc}
\noalign{\global\belowrulesep=0.0ex}
\toprule
\noalign{\global\aboverulesep=0.0ex}
Graph Class & \ene & \ane\\ %
\midrule
2 colours+monochromatic query & $\mathcal{O}(|G|)$ & $\mathcal{O}(|G|)$\\
2 colours+polychromatic query & NP-comp. & $\mathcal{O}(|G|)$\\
DAGs+3 colours+\single query & NP-comp. & coNP-comp.\\
simple cycles & $\mathcal{O}(|G|)$ & $\mathcal{O}(m\cdot|G|)$\\
DAGs with out-degree $\leq 1$ & $\mathcal{O}(|G|^{2.5})$ & $\mathcal{O}(|G|^{2.5})$\\
colour complete graphs no bonuses & $\mathcal{O}(nm\cdot m!)$ & $\mathcal{O}(nm\cdot m!)$\\
\noalign{\global\aboverulesep=0.0ex}
\bottomrule
\end{tabular}
}
\caption{
Summary of the results. The last two classes are unweighted; a simple reduction from the {\sc Partition} problem and its complement, shows \NP and \coNP hardness of their \ene and \ane problems, respectively, in the weighted case.}
\label{tab:results}
\end{table}

Given a directed graph $G$ and a set of nodes $K$, we denote by $G[K]$
the subgraph of $G$ induced by $K$. A (directed) graph $G=(V,E)$ is a \bfe{complete graph} if for all 
$i,j \in V$ such that $i \neq j$, we have $i \to j \in E$.
That is from every node there is an edge to every other node.
Given the set of colours $M$, we say that a directed
graph $G$ is \bfe{colour complete} (with respect to a colour
assignment $C$) if for every colour $c \in M$ each component of
$G[V_c]$ is a complete graph, where $V_c = \{i \in V \mid c \in C(i)\}$. In
particular, every complete graph is colour complete, but not vice versa
(see Figure \ref{fig:colour-complete-graph} in the appendix).

Table~\ref{tab:results} summarises our results in terms of the number
of arithmetic operations needed.  We use binary representation for all
values in $w$ and $\beta$.  The size of the input game graph is $|G|
= \mathcal{O}(nm+e)$, where $n$ is the number of nodes in a graph, $m$
is the number of colours and $e$ is the number of edges.  

Note that these graph classes can occur naturally in practice.  Graphs
with two colours can model duopoly markets and simple cycles are used
in Token ring architectures. Unweighted DAGs with out-degree $\leq 1$
can model indirect elections such as the US primaries where votes are
cast for delegates, who may have their own preferences, rather than
for presidential nominees directly. %
In this context, the \ene question answers who can become the leader
based on the list of candidates each voter realisticly considers
voting for (represented by the set of available colours) and \ane can
tell us if a given candidate wins no matter how the undecided voters
(i.e.  players with non-singleton set of available colours)
vote. Colour complete graphs can model situations where every user
benefits as the number of users increases even if they do not know
each other directly, e.g.~users joining a torrent swarm. Also, in the
context of a market with multiple products, the \ene /\ane %
questions can tell us which product can/will dominate the market in
the end.

\section{Graphs with Two or Three Colours}
\label{sec:two-colours}

We start by studying coordination games with two colours and monochromatic queries.
To fix the notation, let $G=(V,E)$ and the colour set be $M=\{0,1\}$. 
Let $q$ be a monochromatic query.
Without loss of generality, we can assume $q(i) = 0$ for all $i\in Q$, because
otherwise we can rename the colours. We show how to deal with the \ene decision problem first.

\begin{algorithm}
\caption{\label{alg:2c-ene}
Algorithm for \ene on arbitrary graphs with two colours and monochromatic queries.} 
\KwIn{A coordination game $\mathcal{G}((V,E),w,C,\beta)$ and monochromatic query $q : Q \to M$.
} 
\KwOut{YES if there exists a \NE consistent with $q$ and NO otherwise.} 

\For{$i \in V$}{
  {\bf if} {$0 \in C(i)$} {\bf then} {$s(i) = 0$} {\bf else} {$s(i) = 1$} 
}
{\bf set} $\mathcal{S}:= \{i\ |\ s(i) = 1\}$ 

\While{$\mathcal{S} \neq \emptyset$}{
   $\text{remove any element from } \mathcal{S}$ {and assign it to $i$}\\
   \For{$\{j \in V$ | $i \to j \in E\}$}{
      \If{$s(j) = 0$ {\bf and} $1 \in C(j)$ {\bf and} $p_j((1,s_{-j}))>p_j(s)$}
      {    $s(j) = 1$ \\
           $\text{add $j$ to }\mathcal{S}$
      }     
   }
}   

{\bf if} $\forall_{i\in Q}\ s(i) = 0$ {\bf return} YES {\bf else} {\bf return} NO
\end{algorithm}

\begin{restatable}{theorem}{twocolourmonoene}
\label{thm:2colour-mono-ene}
The \ene problem for coordination games with two colours and
monochromatic queries can be solved in $\calO(|G|)$ time
using Algorithm \ref{alg:2c-ene}. 
\end{restatable}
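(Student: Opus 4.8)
The plan is to show that Algorithm~\ref{alg:2c-ene} computes the pointwise-smallest strategy profile compatible with the choices that are forced on the players, and that this single profile already decides the \ene problem. The key observation, special to two colours, is that the total incoming weight $W_j:=\sum_{k\in N_j}w_{k\to j}$ of a node $j$ is independent of the profile, so the advantage of colour $1$ over colour $0$ for $j$ against a fixed $s_{-j}$ equals $2W_1(j)-W_j+\beta(j,1)-\beta(j,0)$, where $W_1(j):=\sum_{k\in N_j,\, s(k)=1}w_{k\to j}$; in particular this advantage is monotone non-decreasing in the set of predecessors of $j$ that currently play colour $1$. For the running time, note that the algorithm only ever changes a strategy from $0$ to $1$ and never back, so each node enters $\mathcal{S}$ at most once and the out-edges of any node are scanned a constant number of times; if one additionally maintains the sums $W_0(j),W_1(j)$ for every $j$ — computed initially in $\calO(|G|)$ and updated by a constant number of operations each time a predecessor of $j$ flips — the profitability test of the inner loop becomes the comparison $W_1(j)+\beta(j,1)>W_0(j)+\beta(j,0)$, costing $\calO(1)$. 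Hence the algorithm terminates after $\calO(n+e)=\calO(|G|)$ arithmetic operations (here $m=2$).

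The core is a monotonicity invariant, proved by induction on the steps of the algorithm: writing $T=\{i:s(i)=1\}$ for the current set of $1$-players, one has $T\subseteq T^\star:=\{i:s^\star(i)=1\}$ for \emph{every} \NE $s^\star$ consistent with $q$. Initially $T=\{i:0\notin C(i)\}$, and each such node plays $1$ in every profile, so the inclusion holds. For the inductive step, suppose the inner loop flips $j$ to $1$: then $s(j)=0$, $1\in C(j)$, and colour $1$ is strictly better than colour $0$ for $j$ against the current $s_{-j}$. By the inductive hypothesis the current $1$-set is contained in $T^\star$, so $j$ receives at least as much weight from $1$-playing predecessors under $s^\star$ as under $s$; by the monotonicity noted above, colour $1$ is then still strictly better than $0$ for $j$ against $s^\star_{-j}$, so if $s^\star(j)$ were $0$ player $j$ could profitably deviate in $s^\star$ — impossible for a \NE. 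Hence $s^\star(j)=1$, i.e.\ $j\in T^\star$, and the invariant is preserved.

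Correctness of both answers follows. If the algorithm returns NO, then $s(i)=1$ at termination for some $i\in Q$, so by the invariant $s^\star(i)=1\neq 0=q(i)$ for every \NE $s^\star$ consistent with $q$; hence no such \NE exists. If it returns YES, then $s(i)=0$ for all $i\in Q$, so the final profile $s$ is consistent with $q$, and it remains to show $s\in\neG$. A node $j$ with $s(j)=1$ either has $0\notin C(j)$, and so cannot deviate, or was flipped at some earlier moment at which colour $1$ was strictly preferable to $0$ for it; since $T$ only grows afterwards and the advantage of $1$ is monotone, this strict preference persists, so $j$ does not deviate. A node $j$ with $s(j)=0$ and $1\in C(j)$ does not want colour $1$ at termination either: otherwise, take the first instant of the run at which colour $1$ became weakly at least as attractive as $0$ for $j$; by monotonicity that instant is the flip to $1$ of some predecessor $k$ of $j$ (or, if the condition held already after initialisation, $k$ is a predecessor of $j$ with $0\notin C(k)$, which sits in $\mathcal{S}$ from the start), and when $k$ was subsequently removed from $\mathcal{S}$ the inner loop would have flipped $j$ as well — $j$ is still $0$ then, being $0$ at termination and flips being irreversible — a contradiction. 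Thus $s$ is a \NE consistent with $q$, and in fact a consistent \NE exists if and only if the algorithm outputs YES.

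The step I expect to be the main obstacle is this last one, namely establishing that the profile produced on a YES-answer is genuinely a \NE: this amounts to checking that the worklist $\mathcal{S}$ never overlooks a profitable switch of a $0$-node, which is exactly what the monotone dependence of a $0$-node's incentive on the colours of its predecessors, combined with the fact that every predecessor is eventually processed, is meant to guarantee. One detail to pin down along the way is a node whose preference for colour $1$ stems entirely from its bonus (so it would deviate even with all predecessors at $0$); for the argument — and for the algorithm — to be correct, such a node must either have a predecessor forced to colour $1$ or be placed into $\mathcal{S}$ already at initialisation, and I would fold the latter case into the initialisation step.
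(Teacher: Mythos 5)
Your argument is correct and is essentially the paper's: your invariant $T\subseteq T^\star$ is exactly what the paper packages as monotonicity of the update operator $F$ (lines 3--9 of the algorithm) with respect to the pointwise order on profiles, together with the fact that every \NE is a fixed point of $F$, so that the algorithm's output $F(s_0)$ is the least \NE above the forced initial assignment. Two remarks. First, the corner case you flag at the end --- a node that prefers colour $1$ purely because of its bonus, even with every predecessor at $0$ --- is a real defect of the algorithm as printed, not just of your write-up: the paper's proof of Lemma~\ref{lm:colour-ext} asserts that such a node ``cannot possibly have an incentive to switch to 1, because this would give him reward 0'', which is false whenever $\beta(j,1)-\beta(j,0)$ exceeds the total incoming weight of $j$ (e.g.\ an isolated node with $C(j)=\{0,1\}$ and $\beta(j,1)>\beta(j,0)$ is left at colour $0$ and the returned profile is not a \NE); your repair of initialising such nodes to colour $1$ and seeding $\mathcal{S}$ with them is the right one, and it preserves both your invariant and the monotonicity argument. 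Second, a trivial wording slip: in the YES-case you should take the first instant at which colour $1$ becomes \emph{strictly} preferable (matching the algorithm's test $p_j((1,s_{-j}))>p_j(s)$), not merely weakly so; with that change the contradiction closes exactly as you describe.
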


Similarly, Algorithm \ref{alg:2c-ane} below solves the \ane problem for monochromatic queries.

\begin{algorithm}
\caption{\label{alg:2c-ane}
Algorithm for \ane on arbitrary graphs with two colours and monochromatic queries.} 
\KwIn{A coordination game $\mathcal{G}((V,E),w,C,\beta)$ and monochromatic query $q : Q \to M$.}
\KwOut{YES if all \NEs are consistent with $q$ and NO otherwise.} 

Lines 1-9 of Algorithm \ref{alg:2c-ene} where every 0 is replaced by 1 and every 1 by 0.

{\bf if} $\forall_{i\in Q}\ s(i) = 0$ {\bf return} YES {\bf else} {\bf return} NO

\end{algorithm}

\begin{restatable}{theorem}{thmtwocolourmonoane}
\label{thm:2colour-mono-ane}
The \ane problem for coordination games with two colours and
monochromatic queries can be solved in $\calO(|G|)$ time
using Algorithm \ref{alg:2c-ane}.
\end{restatable}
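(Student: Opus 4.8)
The plan is to mimic the structure of the proof of Theorem~\ref{thm:2colour-mono-ene}, exploiting the duality that \ane for a monochromatic query $q$ (with all colours renamed so that $q(i)=0$) is equivalent to asking whether \emph{no} \NE assigns colour $1$ to any node in $Q$. Since we may restrict attention to the "extremal" equilibrium that is as biased towards colour $1$ as possible, Algorithm~\ref{alg:2c-ane} simply runs Algorithm~\ref{alg:2c-ene} with the roles of $0$ and $1$ swapped: it initialises every node to colour $1$ whenever $1$ is available (else $0$), then repeatedly lets nodes switch to $0$ when that is a profitable (better-response) move, propagating along out-edges. First I would argue termination and the linear running-time bound exactly as before: each node switches from $1$ to $0$ at most once, so the set $\mathcal{S}$ receives each node at most once, and the total work spent scanning out-edges is $\calO(\sum_i \dg(i)) = \calO(|G|)$, plus $\calO(nm)$ for the initialisation and the final scan over $Q$; hence $\calO(|G|)$ overall.

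Next I would establish correctness, which has two directions. For soundness, I would show that the joint strategy $s$ produced by the loop is in fact a \NE: the key invariant is that once a node has "settled" (is no longer in $\mathcal{S}$ and will never re-enter it), none of its predecessors will ever switch \emph{away} from the colour the node is currently best-responding to — more precisely, the only changes that ever happen are nodes moving from $1$ to $0$, so a node playing $0$ only ever \emph{gains} same-coloured predecessors over time, and a node still playing $1$ has been checked and found to have no profitable switch to $0$ at the moment it was processed, a property that is preserved because its set of $1$-coloured predecessors can only shrink (making $1$ even less attractive is impossible — wait, shrinking the $1$-neighbours makes $0$ relatively \emph{more} attractive). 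This is the subtle point and I will need to be careful: I would argue that whenever a predecessor $j'$ of a node $j$ switches from $1$ to $0$, this is precisely the trigger that puts $j$ (re)into consideration via the \textbf{for} loop over successors of $j'$, so no profitable deviation to $0$ can be "missed"; and a deviation from $0$ to $1$ is never profitable at the end because a node at $0$ has weakly more same-coloured predecessors than it did initially when all such predecessors were at $1$.

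For completeness I would show that $s$ is the \NE that minimises the set $\{i : s(i)=0\}$ pointwise, i.e.\ if some \NE $s^*$ has $s^*(i)=0$ then $s(i)=0$ — equivalently, $s$ maximises the set of nodes playing $1$. I would prove this by induction on the order in which nodes are moved from $1$ to $0$ by the algorithm: if node $i$ is the first node the algorithm demotes to $0$ "wrongly" (i.e.\ $s^*(i)=1$ in some \NE), then at the moment of demotion all of $i$'s predecessors still had their algorithmic values, which by minimality of $i$ agree with $s^*$ on the relevant $1$-coloured predecessors being a subset — so $i$'s better-response to $0$ detected by the algorithm would also be a profitable deviation from $s^*$, contradicting that $s^*$ is a \NE. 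For nodes never demoted, they play $1$ in $s$; combined with the extremal property, $q$ is violated by \emph{some} \NE iff it is violated by $s$, so returning YES exactly when $s(i)=0$ for all $i\in Q$ is correct. The main obstacle is getting the monotonicity/extremality argument airtight — in particular verifying that "only $1\to 0$ switches happen" interacts correctly with the bonus terms $\beta(i,\cdot)$ and arbitrary nonnegative weights, and that the propagation via out-edges genuinely catches every newly-enabled profitable switch; once that invariant is pinned down, the rest follows the template of Theorem~\ref{thm:2colour-mono-ene} essentially verbatim.
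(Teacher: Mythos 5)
Your proposal is correct and takes essentially the same route as the paper: both show that the swapped algorithm's output $F(s_0)$ is itself a \NE and that it pointwise dominates every \NE (your induction on the order of demotions is just an unrolled version of the paper's argument that $F$ is monotone w.r.t.\ $\preceq$, every \NE is a fixed point of $F$, and hence $s' = F(s') \preceq F(s_0)$), so $F(s_0)$ serves as the extremal witness in both directions. The termination and $\calO(|G|)$ running-time analysis is identical to the paper's.
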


In fact, any polychromatic \ane query can be reduced to two monochromatic ones and so we get the following.%
\begin{restatable}{corollary}{twocolourpolyane}
\label{cor:2colour-poly-ane}
The \ane problem for coordination games with two colours and
polychromatic queries can be solved in $\calO(|G|)$ time.
\end{restatable}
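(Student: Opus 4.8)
The corollary claims that ANE for two colours and polychromatic queries reduces to two monochromatic ANE instances, each solvable in $\calO(|G|)$ by Theorem~\ref{thm:2colour-mono-ane}. I would prove this by splitting the query $q : Q \to \{0,1\}$ according to colour. Let $Q_0 = \{i \in Q \mid q(i) = 0\}$ and $Q_1 = \{i \in Q \mid q(i) = 1\}$, and let $q_0 = q|_{Q_0}$ and $q_1 = q|_{Q_1}$ be the induced restrictions; both are monochromatic (possibly empty or singleton, but if empty the corresponding constraint is vacuous, and the remaining one is just a standard monochromatic query). The key observation is the purely set-theoretic fact that for any joint strategy $s$, we have $q = s|_Q$ if and only if both $q_0 = s|_{Q_0}$ and $q_1 = s|_{Q_1}$.

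**Main step.** From this it follows that \emph{every} \NE is consistent with $q$ iff every \NE is consistent with $q_0$ \emph{and} every \NE is consistent with $q_1$; that is, the ANE instance for $q$ returns YES iff both the ANE instances for $q_0$ and for $q_1$ return YES. Each of the latter two is an ANE problem with two colours and a monochromatic query, so by Theorem~\ref{thm:2colour-mono-ane} each is solvable in $\calO(|G|)$ time using Algorithm~\ref{alg:2c-ane}. Running both and taking the conjunction of their answers costs $\calO(|G|) + \calO(|G|) = \calO(|G|)$, which gives the claimed bound.

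**The subtle point.** The one place that requires a word of care is the handling of $Q_1$: Algorithm~\ref{alg:2c-ane} and Theorem~\ref{thm:2colour-mono-ane} are stated after the normalisation ``without loss of generality $q(i) = 0$ for all $i \in Q$'', so strictly the theorem is about queries that assign colour $0$ everywhere. For $q_1$, which assigns colour $1$ everywhere, one applies the theorem after swapping the roles of the two colours (equivalently, Algorithm~\ref{alg:2c-ane} already performs a $0 \leftrightarrow 1$ swap relative to Algorithm~\ref{alg:2c-ene}, so the variant needed for $q_1$ is obtained by not performing that swap). I expect this renaming bookkeeping to be the only obstacle, and it is routine; the decomposition itself, being just $q = q_0 \cup q_1$ as functions with disjoint domains, carries no real difficulty. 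I would also note the degenerate cases explicitly: if $Q_0 = \emptyset$ then the instance is equivalent to the single monochromatic instance $q_1$ (and symmetrically), and if $|Q| = 1$ the query is already monochromatic so the corollary is immediate from Theorem~\ref{thm:2colour-mono-ane}.
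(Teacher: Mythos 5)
Your proposal is correct and follows essentially the same route as the paper: split $q$ into the two monochromatic restrictions $q_0$ and $q_1$, use that the universal quantifier over \NEs distributes over the conjunction of the two consistency conditions, and invoke Theorem~\ref{thm:2colour-mono-ane} twice. The extra remark about swapping colours for the $q_1$ instance is the same normalisation the paper handles implicitly via its ``without loss of generality'' convention.
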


However, we will show that even answering \single \ane queries for unweighted DAGs 
is \coNP-hard in the presence of three colours and no bonuses.
We first analyse the following gadget.

\begin{figure}
\begin{minipage}{.45\textwidth}
\centering
\tikzstyle{agent}=[circle,draw=black!80,thick, minimum size=1em,scale=0.75]
\begin{tikzpicture}[auto,>=latex',shorten >=1pt,on grid]
\begin{scope}
\node[agent,label=above:{\small $\{\top, \bot\}$}](x1){\small{$X_1$}};
\node[agent, right of=x1, node distance=1.5cm,label=above:{\small $\{\top, \bot\}$}](x2){\small $X_2$};
\node[draw=none,fill=none, right of=x2,node distance=1.5cm](xdots){$\cdots$};
\node[agent, right of=xdots, node distance=1.5cm,label=above:{\small $\{\top, \bot\}$}](xk){\small $X_k$};
\node[agent, right of=xk, node distance=1.5cm, minimum size=2.3em,label=above:{$\{x\}$}](c){};
\node[agent, below of=xdots, node distance=1.5cm,label=below:{\small $\{\top, \bot\}$}](y){\large $Y$};
\draw[->] (x1) to (y);
\draw[->] (x2) to (y);
\draw[->] (xk) to (y);
\draw[->] (c) to node {{\small $k-1$}} (y);
\end{scope}
\end{tikzpicture}
\caption{\label{fig:gadget}
Gadget $D(X_1, \ldots, X_k, x; Y)$ where $x \in \{\top,\bot\}$. Note that one edge has weight $k-1$.}
\end{minipage}
\hfill
\begin{minipage}{.5\textwidth}
\centering
\tikzstyle{agent}=[circle,draw=black!80,thick, minimum size=2em,scale=0.75]
\begin{tikzpicture}[auto,>=latex',shorten >=1pt,on grid]
\newdimen\R
\R=1.1cm
\begin{scope}
\node[agent,label=above:{\small $\{\star\}$}](x1){};
\node[agent, right of=x1, node distance=\R,label=above:{\small $\{\top, \bot\}$}](x2){T};
\node[agent, right of=x2, node distance=1.2\R,label=above:{\small $\{\bot\}$}](x3){};
\node[agent, right of=x3, node distance=2.1\R,label=above:{\small $\{\top, \bot\}$}](x4){F};
\node[agent, right of=x4, node distance=\R,label=above:{\small $\{\star\}$}](x5){};
\node[agent, right of=x5, node distance=\R,label=above:{\small $\{\top,\bot\}$}](x6){$\Phi$};
\node[agent, below of=x2, node distance=\R,label=left:{\small $\{\top,\star\}$}](xy1){U};
\node[agent, below of=x3, node distance=2\R,label=left:{\small $\{\bot,\star\}$}](y1){W};
\node[agent, below of=x4, node distance=\R,label=left:{\small $\{\bot,\star\}$}](y2){X};
\node[agent, below of=x5, node distance=\R,label=right:{\small $\{\bot,\star\}$}](y3){Y};
\node[agent, right of=y1, node distance=4\R,label=right:{\small $\{\star\}$}](y4){};
\node[agent, below of=y2, node distance=2\R,label=below:{\small $\{\bot,\star\}$}](z1){Z};
\draw[->] (x1) to (xy1);
\draw[->] (x2) to (xy1);
\draw[->] (xy1) to node [above=2pt] {{\small $2$}} (y1);
\draw[->] (x3) to (y1);
\draw[->] (x4) to (y2);
\draw[->] (x5) to (y2);
\draw[->] (x5) to (y3);
\draw[->] (x6) to (y3);
\draw[->] (y1) to (z1);
\draw[->] (y2) to (z1);
\draw[->] (y3) to (z1);
\draw[->] (y4) to node [above=2pt] {{\small $2$}} (z1);
\end{scope}
\end{tikzpicture}
\caption{\label{fig:gadget3}
Gadget used in the \coNP-hardness proof of \ane. 
Edges with weight 2 can be simulated by unweighted ones.
}
\end{minipage}
\end{figure}

\begin{restatable}{proposition}{propgadget}
\label{prop:gadget}
For any \NE $s$ in $D(X_1, \ldots, X_k, x; Y)$ %
from Figure~\ref{fig:gadget}:
{\bf (a)} $s(Y) = x$ iff $\ \exists_i\ s(X_i) = x$ and
{\bf (b)} $s(Y) = \neg x$ iff $\ \forall_i\ s(X_i) = \neg x$.
\end{restatable}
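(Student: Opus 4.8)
The plan is to reason directly about the best response of $Y$, the only node in the gadget with a non-trivial payoff. Fix any NE $s$ of $D(X_1,\ldots,X_k,x;Y)$ and set $a := |\{i : s(X_i) = x\}|$, so that $|\{i : s(X_i) = \neg x\}| = k-a$. Observe that the node with singleton strategy set $\{x\}$ (call it $c$) necessarily plays $x$, and that the in-neighbours of $Y$ are exactly $X_1,\ldots,X_k$ (each via an unweighted edge) together with $c$ (via an edge of weight $k-1$), with zero bonus on both colours. Hence I would record
\[
  p_Y(x, s_{-Y}) = a + (k-1), \qquad p_Y(\neg x, s_{-Y}) = k-a,
\]
the $k-1$ term appearing in the first equation because $c$ plays $x$, and being absent from the second because $c$ plays $x \neq \neg x$.

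Next I would compare these two values: $a+(k-1) > k-a$ iff $2a > 1$ iff $a \geq 1$, while $a+(k-1) < k-a$ iff $a = 0$; since $a$ is a non-negative integer, exactly one of the two cases occurs and the payoffs are never equal. Because $s$ is a NE, $s(Y)$ is a best response of $Y$, so $a \geq 1$ forces $s(Y) = x$ and $a = 0$ forces $s(Y) = \neg x$. Conversely $s(Y) \in \{x, \neg x\}$, so $s(Y) = x$ implies $a \geq 1$ and $s(Y) = \neg x$ implies $a = 0$. This gives $s(Y) = x \iff a \geq 1 \iff \exists_i\, s(X_i) = x$, which is (a); part (b) is the contrapositive, since $s(Y) = \neg x$ means $s(Y) \neq x$ and $\forall_i\, s(X_i) = \neg x$ means $\neg\,\exists_i\, s(X_i) = x$.

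There is essentially no obstacle here; the only point needing care is the weight $k-1$ on the edge $c \to Y$. It is tuned precisely so that the two candidate payoffs of $Y$, namely $a + (k-1)$ and $k - a$, cross strictly as $a$ moves from $0$ to $1$ and never tie — exactly what makes $Y$ behave as an OR-gate over the inputs $s(X_i)$ when read through (a), and as an AND-gate when read through (b). As a sanity check, since each $X_i$ has no in-edge and zero bonus it is indifferent between $\top$ and $\bot$, so every colouring of $X_1,\ldots,X_k$ extends to a NE of the gadget via the forced value of $s(Y)$; this confirms both branches of (a) are attainable, although it is not needed for the statement itself.
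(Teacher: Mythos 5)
Your proof is correct and follows essentially the same route as the paper's: a direct best-response analysis of $Y$, comparing its payoff $a+(k-1)$ for colour $x$ against $k-a$ for $\neg x$. Your version is in fact slightly tighter than the paper's (which loosely bounds the $\neg x$ payoff by $k$ rather than $k-1$ in one case), and deriving (b) as the contrapositive of (a) is a clean simplification of the paper's four-case analysis.
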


Using this gadget we are able to show the following.

\begin{restatable}{theorem}{DAGtwo}
\label{thm:DAG2}
The \ane problem for \single queries is \coNP-complete for unweighted DAGs with three colours and no bonuses.
\end{restatable}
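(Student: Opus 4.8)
The plan is to prove \coNP-completeness of the \ane problem for \single queries on unweighted DAGs with three colours by reducing from the complement of a canonical \NP-complete problem — most naturally from $\exists$\textsc{3Sat} to its complement, i.e.\ a reduction from (validity of a) \textsc{Dnf} formula or, equivalently, from \textsc{Taut}-like problems; concretely, given a Boolean formula $\phi$ in CNF over variables $x_1,\dots,x_n$, I would build in polynomial time an unweighted DAG $G_\phi$ with colour set $M=\{\top,\bot,\star\}$, no bonuses, and a distinguished node $\Phi$ with a \single query $q(\Phi)=\bot$, such that every \NE of $\mathcal{G}(G_\phi)$ satisfies $s(\Phi)=\bot$ if and only if $\phi$ is unsatisfiable. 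Since membership in \coNP is already noted in the preliminaries (a candidate counterexample \NE can be verified in polynomial time), the work is entirely in the hardness reduction.

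The construction proceeds in three layers. First, a \emph{variable layer}: for each variable $x_i$ a node $X_i$ with colour set $\{\top,\bot\}$, whose chosen colour encodes a truth assignment (and, as in Figure~\ref{fig:gadget3}, auxiliary $\{\star\}$-nodes and $\{\top,\bot\}$-nodes forced to a fixed colour, so that each $X_i$ is genuinely free in every \NE). Second, a \emph{clause-evaluation layer}: using the gadget $D(X_1,\dots,X_k,x;Y)$ of Figure~\ref{fig:gadget} together with its dual, I would wire up, for each clause $c=\ell_1\vee\cdots\vee\ell_r$, a node $Y_c$ so that Proposition~\ref{prop:gadget} forces, in every \NE, $s(Y_c)=\top$ iff at least one literal of $c$ is satisfied by the assignment read off the $X_i$'s; the gadget's part (a)/(b) dichotomy is exactly an OR/AND test over its inputs, which is what is needed to implement a clause. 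Third, an \emph{aggregation layer}: combine all the $Y_c$ via another application of the same gadget so that the output node $\Phi$ (with colour set $\{\top,\bot\}$) is forced in every \NE to satisfy $s(\Phi)=\top$ iff \emph{all} clauses are satisfied, i.e.\ iff the assignment satisfies $\phi$. Then $\phi$ is satisfiable iff some \NE has $s(\Phi)=\top$ iff not every \NE satisfies $q(\Phi)=\bot$; hence the \ane query $q(\Phi)=\bot$ holds iff $\phi$ is unsatisfiable, giving a reduction from \textsc{UnSat}, which is \coNP-complete. I would also check acyclicity: the variable layer feeds the clause layer feeds the aggregation layer, with all gadget internal edges oriented consistently, so $G_\phi$ is a DAG, and only three colours and no bonuses are used, with the few weight-$2$ and weight-$(k-1)$ edges of the gadgets replaced by the unweighted simulation noted after Figure~\ref{fig:gadget3} (and in the model section) — this keeps the DAG property since the simulation adds only fresh nodes on a path.

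The main obstacle I anticipate is the bidirectional correctness of the gadget chaining: Proposition~\ref{prop:gadget} is stated for a \NE of the isolated gadget, but in $G_\phi$ the gadgets share nodes (the $X_i$'s feed many clause gadgets, the $Y_c$'s feed the aggregator), so I must verify that (i) every truth assignment extends to \emph{some} global \NE — i.e.\ the forced values propagate consistently and no node outside the intended pattern has a profitable deviation, which requires the $\star$-colour to act as a neutral ``escape'' so that auxiliary nodes can always best-respond — and (ii) conversely every global \NE restricts to a consistent assignment on the $X_i$'s and the forced propagation of Proposition~\ref{prop:gadget} applies layer by layer. Establishing (i) is the delicate direction: one must exhibit, for each assignment, explicit colours for all auxiliary nodes and argue local optimality node-by-node, using that a node with a $\star$ option and no $\star$-predecessor gets payoff $0$ from $\star$ and so is content unless some non-$\star$ colour strictly beats $0$. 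I would isolate this as a lemma (``for every assignment $\alpha$ there is a \NE $s_\alpha$ with $s_\alpha(X_i)=\alpha(x_i)$ and $s_\alpha(\Phi)=\top$ iff $\alpha\models\phi$''), prove it by the explicit-extension argument, and pair it with the converse propagation lemma; the theorem then follows immediately. A secondary, routine check is that $|G_\phi|$ is polynomial in $|\phi|$, which is clear since each clause and the aggregator contribute a constant or linear number of nodes and edges per literal.
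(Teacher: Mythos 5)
Your high-level plan (reduce from unsatisfiability/non-tautology, evaluate the formula inside the game with the $D$ gadgets, and query an output node) matches the paper's in outline, but there is a genuine gap at its core: you have no way to evaluate \emph{negated} literals. In this game model every node's best response rewards agreeing with its predecessors, so the gadget $D$ of Figure~\ref{fig:gadget} (and its ``dual'', which is just part (b) of Proposition~\ref{prop:gadget}) can only copy, OR and AND colours; it cannot produce a node whose colour is, in every \NE, the complement of its predecessor's. Hence a clause node $Y_c$ for a clause containing $\lnot x_j$ cannot be wired to a single free node $X_j$ as you propose. If you instead add an independent free node $\lnot X_j$, nothing forces $s(\lnot X_j)\neq s(X_j)$ in every \NE, so there are ``garbage'' equilibria (e.g.\ $s(X_j)=s(\lnot X_j)=\top$) in which $\Phi$ can take colour $\top$ even for an unsatisfiable $\phi$, breaking the direction ``$\phi$ unsatisfiable $\Rightarrow$ every \NE has $s(\Phi)=\bot$''. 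A sanity check that something must be missing: by Corollary~\ref{cor:2colour-poly-ane} the \ane problem with two colours is solvable in linear time, so any hardness construction must use the third colour in an essential way, whereas in your sketch $\star$ serves only as a neutral escape for auxiliary nodes and the query $q(\Phi)=\bot$ lives entirely in $\{\top,\bot\}$.

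The paper's proof resolves exactly this point. It keeps both free nodes $X_i$ and $\lnot X_i$, adds detector nodes $L_i,\overline{L}_i$ (via $D(X_i,\lnot X_i,\top;L_i)$ and $D(X_i,\lnot X_i,\bot;\overline{L}_i)$) and collectors $T,F$ that certify, \emph{within a given \NE}, whether every pair $X_i,\lnot X_i$ received complementary colours; it then routes $T$, $F$ and the formula-value node $\Phi$ into the three-colour gadget of Figure~\ref{fig:gadget3}, whose \single query node $Z$ (with colours $\{\bot,\star\}$) deviates from $\star$ precisely in those \NEs that both encode a valid assignment and falsify the DNF formula. Invalid equilibria are thus rendered harmless rather than excluded. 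Your proposal, which places the query directly on $\Phi$, omits this entire mechanism, and I do not see how to repair it without reintroducing it.
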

\begin{proof}
We reduce from the tautology problem for formulae in 3-DNF form. 
Assume we are given a formula 
\[\phi = (a_1 \wedge b_1 \wedge c_1) \vee (a_2 \wedge b_2 \wedge c_2) \vee \ldots \vee (a_k \wedge b_k \wedge c_k)\]
with $k$ clauses and $n$ propositional variables $x_1, \ldots, x_n$,
where each $a_i,b_i,c_i$ is a literal equal to $x_j$ or $\lnot x_j$
for some $j$. We will construct a coordination game $\game$ of size
$\mathcal{O}(n+k)$ such that a particular \single \ane query is true for $\game$ 
iff $\phi$ is a tautology.

First %
for every propositional variable $x_i$ there are four
nodes $X_i$, $\neg X_i$, $L_i$, $\overline{L}_i$ 
in $\game$, each with two possible colours $\top$ or $\bot$.
We connect these four nodes using gadgets
$D(X_i, \neg X_i, \top; L_i)$ and 
$D(X_i, \neg X_i, \bot; \overline{L}_i)$.
This makes sure that in any \NE, s, 
we have $s(L_i) = \top$ and $s(\overline{L}_i) = \bot$ 
iff $X_i$ and $\neg X_i$ are assigned different colours.
Next, for every clause $(a_i \wedge b_i \wedge c_i)$ in $\phi$ we
add to the game graph $\game$ node $C_i$. 
We use gadget $D(a_i, b_i, c_i, \bot; C_i)$ to connect 
literals with clauses, 
where we identify each $x_i$ with $X_i$ and 
each $\neg x_i$ with $\neg X_i$.
Note that Proposition \ref{prop:gadget} implies that
the colour of $C_i$ is $\top$ iff all nodes $a_i, b_i, c_i$ are assigned $\top$.
We add two nodes $T$ and $F$ to gather 
colours $\top$ and $\bot$ from the $L_i$ and $\overline{L}_i$ nodes.
Also, we add an additional node $\Phi$ to gather the values of all the clauses.
We connect these using gadgets 
$D(L_1,\ldots,L_n,\bot; T)$,
$D(\overline{L}_1,\ldots,\overline{L}_n,\top; F)$,
and
$D(C_1,\ldots,C_k,\top; \Phi)$.
Now, we need to express 
that for every \NE $s$: $s(T) = \top$ and
$s(F) = \bot$ implies that $s(\Phi) = \top$. For this we use the gadget
from Figure~\ref{fig:gadget3}. It includes three nodes $T,
F, \Phi$ that we already defined in $\game$.
 We claim that \ane query $q(Z) = \star$ is true for $\game$ iff
$\Phi$ is a tautology. {\em (The full proof is in the appendix.)}
\qed
\end{proof}

On the other hand, we can show
that answering polychromatic \ene queries is \NP-hard for unweighted DAGs 
even with two colours and no bonuses. The construction %
is similar to the one in the proof of Theorem~\ref{thm:DAG2}.

\begin{restatable}{theorem}{thmDAG}
\label{thm:DAG}
The \ene problem is \NP-complete for unweighted DAGs with two colours and no bonuses.
\end{restatable}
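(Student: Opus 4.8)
The statement to prove is Theorem~\ref{thm:DAG}: the \ene problem is \NP-complete for unweighted DAGs with two colours and no bonuses. Membership in \NP was already observed in the preliminaries, so the work is the hardness direction. The plan is to reduce from 3-SAT (or from a convenient \NP-complete variant), mirroring the gadget-based construction used in the proof of Theorem~\ref{thm:DAG2}, but now exploiting the fact that with only two colours a polychromatic query is a genuinely expressive device: fixing some query nodes to colour $\top$ and others to colour $\bot$ lets us pin down ``truth values'' directly.

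**Gadgets.** First I would reuse the gadget $D(X_1,\ldots,X_k,x;Y)$ from Figure~\ref{fig:gadget} and Proposition~\ref{prop:gadget}, which with two colours still behaves as an AND/OR element: in any \NE, $s(Y)=x$ iff some $X_i$ equals $x$, and $s(Y)=\neg x$ iff all $X_i$ equal $\neg x$. Note that the single weighted edge of weight $k-1$ in that gadget can be simulated by $k-1$ parallel unweighted paths (a fresh intermediate node on each), keeping the graph unweighted and still a DAG, at a cost of $O(k)$ extra nodes; I would spell this out once. The construction then proceeds as follows: for each propositional variable $x_i$ introduce a node with colour set $\{\top,\bot\}$ (its colour is its truth value); for each clause introduce a node $C_i$ wired via $D$-gadgets from the three literal nodes so that $s(C_i)=\top$ iff the clause is satisfied (using $D(a_i,b_i,c_i,\bot;C_i)$ and identifying $x_j$ with $X_j$, $\neg x_j$ with a negation node $\neg X_j$ forced to be the opposite colour of $X_j$, exactly as in the proof of Theorem~\ref{thm:DAG2}); and finally a node $\Phi$ wired via $D(C_1,\ldots,C_k,\top;\Phi)$ so that $s(\Phi)=\top$ iff all clauses are satisfied, i.e.\ iff the formula is satisfied. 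I then set the query $q$ to fix $\Phi\mapsto\top$ together with whatever auxiliary nodes are needed to force the negation-consistency (the $L_i,\overline L_i$-style nodes), i.e.\ a polychromatic query in general.

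**Correctness.** The two directions are routine once the gadget semantics are in place. If $\phi$ is satisfiable, take any satisfying assignment, colour the variable nodes accordingly, propagate the forced colours through all $D$-gadgets (each gadget being a DAG, this is well-defined and yields a \NE on the gadget internals by Proposition~\ref{prop:gadget}), and check that the resulting joint strategy is a global \NE consistent with $q$ — here one must verify that no node wants to deviate, which follows because every node's colour is a best response by construction of the gadgets. Conversely, any \NE consistent with $q$ has $s(\Phi)=\top$, which by Proposition~\ref{prop:gadget} forces every $s(C_i)=\top$, which forces each clause's literals to the right colours, yielding a satisfying assignment read off the variable nodes. Finally the graph has $O(n+k)$ nodes and is a DAG (the $D$-gadgets only add forward edges from inputs to output, and the layering variables $\to$ clauses $\to$ $\Phi$ is acyclic), the weights are all $1$ and bonuses all $0$, and the reduction is clearly polynomial.

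**Main obstacle.** I expect the only real subtlety — and the thing I would be careful about — is ensuring the construction is genuinely acyclic and that the negation nodes $\neg X_i$ are forced to be the complement of $X_i$ without creating a cycle or needing bonuses: with undirected-style coordination this is trivial, but with a DAG we must route this through auxiliary nodes and $D$-gadgets (as done for the $L_i,\overline L_i$ nodes in Theorem~\ref{thm:DAG2}) and include those auxiliary nodes in the query, which is precisely why the query is polychromatic rather than \single. A secondary point worth stating explicitly is that, unlike the \coNP-hardness of \ane which needed the extra Figure~\ref{fig:gadget3} gadget, here the ``implication'' is free: we simply demand $s(\Phi)=\top$ outright in $q$, so the two-colour \ene reduction is if anything simpler than its \ane counterpart — I would note this to justify the remark that the construction ``is similar to'' the proof of Theorem~\ref{thm:DAG2}.
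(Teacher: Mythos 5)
Your overall strategy is exactly the paper's: reduce from 3-SAT, use the gadget $D$ of Proposition~\ref{prop:gadget} both to enforce that $X_i$ and $\neg X_i$ take complementary colours (via the $L_i,\overline{L}_i$ nodes) and to evaluate clauses, simulate the single weighted edge by unweighted paths to keep the DAG unweighted, and pin the relevant nodes with a polychromatic query; membership in \NP is indeed already noted in the preliminaries. (The paper additionally funnels all the $L_i$, $\overline{L}_i$ and $C_i$ nodes into just two collector nodes via $D(L_1,\ldots,L_n,C_1,\ldots,C_k,\bot;T)$ and $D(\overline{L}_1,\ldots,\overline{L}_n,\top;F)$ and queries $q(T)=\top$, $q(F)=\bot$; your variant of querying the $L_i$, $\overline{L}_i$ and $\Phi$ nodes directly is equally valid for the theorem as stated.)

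There is, however, a concrete error in your gadget instantiations, and it sits at the one point where precision matters. By Proposition~\ref{prop:gadget}, $D(X_1,\ldots,X_k,\bot;Y)$ forces $s(Y)=\top$ iff \emph{all} inputs are $\top$ (conjunction), while $D(X_1,\ldots,X_k,\top;Y)$ forces $s(Y)=\top$ iff \emph{some} input is $\top$ (disjunction). A 3-SAT clause $(a_i\vee b_i\vee c_i)$ therefore needs $D(a_i,b_i,c_i,\top;C_i)$, not $D(a_i,b_i,c_i,\bot;C_i)$ as you write --- the latter is the gadget for the conjunctive clauses of the 3-DNF reduction in Theorem~\ref{thm:DAG2}, which you appear to have copied verbatim. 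Dually, collecting ``all clauses true'' into $\Phi$ needs polarity $\bot$, not $D(C_1,\ldots,C_k,\top;\Phi)$, which yields ``some clause true.'' With both polarities flipped as in your writeup, the construction decides whether some assignment makes all three literals of some clause true, i.e.\ satisfiability of $\phi$ read as a DNF --- a trivially polynomial question --- so the reduction as written does not establish \NP-hardness. The fix is mechanical (swap the two polarities so that the stated gadget matches the semantics you announce for it), and with that correction your proof coincides with the paper's.
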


Building on this we can show the following when there are three colours to choose from.

\begin{restatable}{corollary}{corDAG}
\label{cor:DAG}
The \ene problem for \single queries
is \NP-complete for unweighted DAGs with three colours and no bonuses.
\end{restatable}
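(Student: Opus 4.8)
The plan is to prove hardness by reducing from the \ene problem for unweighted DAGs with two colours and no bonuses, which is \NP-complete by Theorem~\ref{thm:DAG}; membership in \NP is immediate since \ene is in \NP in general. So I would start from a two-colour instance: an unweighted DAG $G=(V,E)$, a colour assignment $C:V\to 2^{\{0,1\}}\setminus\{\emptyset\}$, and a (possibly polychromatic) query $q:Q\to\{0,1\}$, and build a three-colour game $\mathcal{G}'$ over colours $\{0,1,\star\}$ together with one fresh node $z$, so that $\mathcal{G}'$ has a \NE consistent with the \single query $q'(z)=0$ iff $\mathcal{G}(G,C)$ has a \NE consistent with $q$.

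The game $\mathcal{G}'$ will contain $G$ verbatim (still on colours $\{0,1\}$) plus extra nodes and edges, all placed \emph{downstream} of $V$ — no new edge points into $V$. Hence the best responses of the $V$-nodes are untouched, the Nash equilibria of $\mathcal{G}'$ restricted to $V$ are exactly $\neG$, and the task reduces to designing the downstream part so that (i) every $s|_V\in\neG$ has a \emph{unique} best-response completion on the added nodes, and (ii) in that completion $z=0$ precisely when $s|_Q=q$.

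The downstream part I would use has three layers. First, a \emph{detector} $g_i$ for each $i\in Q$: if $q(i)=1$ give $g_i$ colour $\{1,\star\}$ with a weight-$2$ edge from $i$ and a weight-$1$ edge from a fresh singleton-$\{\star\}$ node, which forces $g_i=1$ when $s(i)=1$ and $g_i=\star$ when $s(i)=0$; if $q(i)=0$ use colour $\{0,\star\}$ symmetrically, so $g_i=0$ iff $s(i)=0$ and $g_i=\star$ iff $s(i)=1$. Thus $g_i$ plays its non-$\star$ colour exactly when coordinate $i$ of the query is met and $\star$ otherwise. Second, aggregate the two query-colour classes \emph{separately} using the gadget of Figure~\ref{fig:gadget}: applying Proposition~\ref{prop:gadget} with the relabelling $\top\mapsto 1,\ \bot\mapsto\star$ and fixed colour $\star$, the gadget $D\big((g_i)_{i\in Q,\,q(i)=1},\star;A\big)$ gives a node $A$ of colour $\{1,\star\}$ with $A=1$ iff all these $g_i$ equal $1$ and $A=\star$ otherwise, and symmetrically $D\big((g_i)_{i\in Q,\,q(i)=0},\star;B\big)$ gives $B$ of colour $\{0,\star\}$ with $B=0$ iff all the remaining $g_i$ equal $0$ (if one class is empty, take the corresponding node to be a singleton of colour $\{1\}$, resp.\ $\{0\}$). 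Third, let $z$ have colour $\{0,\star\}$ with a weight-$2$ edge from $A$ and a weight-$1$ edge from $B$. A short case check over $(A,B)\in\{1,\star\}\times\{0,\star\}$ shows that in any \NE, $z=0$ iff $A=1$ and $B=0$, i.e.\ iff every query coordinate is satisfied, and $z=\star$ otherwise. All weight-$2$ edges are eliminated by the standard node-duplication trick, so $\mathcal{G}'$ is an unweighted DAG of size polynomial in $|\mathcal{G}|$ with three colours and no bonuses; correctness is then routine — a \NE of $\mathcal{G}$ consistent with $q$ completes uniquely to a \NE of $\mathcal{G}'$ with $z=0$, and conversely the restriction to $V$ of a \NE of $\mathcal{G}'$ with $z=0$ is a \NE of $\mathcal{G}$ whose value of $z$ propagates back through $A$, $B$ and the detectors to force $s(i)=q(i)$ for all $i\in Q$.

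The one point that needs care — and the reason for \emph{two} aggregators rather than one — is that a coordination gadget can only copy a colour or match against a fixed colour, never negate, so a $\{1,\star\}$ detector cannot be recoloured to a uniform $\{0,\star\}$ interface, and a single instance of the $D$-gadget over the colour $\star$ applied to all detectors at once would misfire: when the "satisfied" detectors are split between colours $0$ and $1$ (which is exactly the all-satisfied situation for a polychromatic $q$) the aggregating node would prefer $\star$. Splitting by query colour, taking the conjunction within each class via $D$ with $\star$ as the fixed colour, and only then merging the two one-bit outcomes at $z$ with unequal weights $2$ and $1$, is what makes the reduction go through; I expect verifying these forced-value claims (for the detectors, for $A$ and $B$ via Proposition~\ref{prop:gadget}, and for $z$) to be the only real work.
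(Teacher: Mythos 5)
Your reduction is correct and is essentially the paper's: the paper likewise takes the game and two-node query $q(T)=\top$, $q(F)=\bot$ from Theorem~\ref{thm:DAG} as a black box and attaches a small downstream gadget (Figure~\ref{fig:gadget2}) that uses the third colour $\star$ to collapse that polychromatic query into the \single query $q(Z)=\star$. The only difference is one of generality: you build a detector-and-aggregator compression that handles an arbitrary polychromatic two-colour query (hence the two separate colour-class aggregators you rightly insist on), whereas the paper hardwires its gadget to the specific pair $T,F$, which makes its gadget smaller but your reduction reusable.
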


Note that we can also show \NP/\coNP-hardness for DAGs with 
out-degree at most two, 
because we can make arbitrary number of copies of any given node, 
e.g. to make three copies $i_1, i_2, i_3$ of node $i$ we can add nodes $i', i_1, i_2, i_3$ and edges $i \to i_1$, $i \to i'$, $i' \to i_2$, $i' \to i_3$.

\section{Simple Cycles}
\label{sec:simple-cycles}
We consider here coordination games whose underlying graph
is a simple cycle. To fix the notation, suppose that
$V=\{0,1,\ldots,n-1\}$ and the underlying graph is $0 \to 1 \to \cdots \to
n-1 \to 0$. We assume that the counting is done in cyclic order within
$\{0, \ldots, n-1\}$ using the increment operation $i \oplus 1$ and the
decrement operation $i \ominus 1$. In particular, $(n-1) \oplus 1 = 0$
and $0 \ominus 1 = n-1$.

\newcommand{\wgtone}[1]{\wgt{#1\ominus 1}{#1}}
For $i \in V$, let $Z_i(w) = \{c \in C(i) \mid \beta(i,c) + w \geq
\beta(i,c')$ for all $c' \in C(i)\}$ denote the set of colours available to player $i$
with the bonus at most $w$ below the maximum one available to $i$. 
For every $i \in V$, define $A_i := Z_i(0)$, i.e. all colours with the maximum bonus, 
$B_i := Z_i(\wgtone{i}-1)$, and
$C_i := Z_i(\wgtone{i})$.
Obviously $\emptyset \neq A_i \subseteq B_i \subseteq C_i \subseteq C(i)$ for every $i$.
It is quite easy to see that in any NE 
player $i$ can only select a colour from $C_i$.
Let us fix a query $q: Q \to M$.
In this section, without loss of generality,
we assume that $0 \in Q$ (if $0 \not\in Q$, then we can always
re-label the nodes in the cycle).

\begin{algorithm}
\caption{\label{alg:cycle-ene}
\ene on a simple cycle} 
\KwIn{A simple cycle on nodes $\{0,\ldots,n-1\}$, sets $A_i$, $B_i$, $C_i$ for $i \in V$, a query $q : Q \to M$.}
\KwOut{YES if there exists a Nash equilibrium consistent with $q$ and
  NO otherwise.}

Let $X_0 = \{q(0)\}$.

\For{$i = 0$ {\bf to} $n-1$}{
  \If{$X_i \not\subseteq B_{i\oplus 1}$}
     {$X_{i\oplus 1} = (X_i \cap C_{i\oplus 1}) \cup A_{i\oplus 1}$}
  \Else{
     {$X_{i\oplus 1} = X_i$}     
  }
  \If{$i\oplus 1 \in Q$}{
	  \If{$q(i\oplus 1) \not\in X_{i \oplus 1}$}
		  {{\bf return} NO}
	  \Else{
		$X_{i \oplus 1} = \{q(i\oplus 1)\}$
	  }
  }

}

{\bf return} YES

\end{algorithm}
\vskip-0.3em
\begin{algorithm}%
\caption{
\label{alg:cycle-ane-weighted}
Algorithm for \ane on a simple cycle.}  
\KwIn{A simple cycle on nodes $\{0,\ldots,n-1\}$, sets $A_i$, $B_i$, $C_i$ for $i \in V$, a query $q : Q \to M$.}
\KwOut{YES if all NEs are consistent with $q$ and NO otherwise.}

\For{$c \in M$}{
   \If {Algo.~\ref{alg:cycle-ene} for $q':=\{0 \to c\}$ returns NO}{
        {{\bf continue} with the next $c$}
   }\Else{
     Consider $X_{i}$ computed by Algo. \ref{alg:cycle-ene} for $q'$: \\
     \If{exists $i \in Q$ such that $X_{i} \neq \{q(i)\}$}{
        {{\bf return} NO}
     }    
   }
}
{\bf return} YES

\end{algorithm}

\begin{restatable}{theorem}{enesimplecycles}
\label{thm:ene-simple-cycles}
The \ene problem for simple cycles can be solved in $\calO(|G|)$ time.
\end{restatable}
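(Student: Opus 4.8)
The plan is to exploit the fact that on a simple cycle each node $i$ has a unique predecessor $i \ominus 1$, so that $s\in\neG$ is equivalent to a purely local condition: for every $i$, the colour $s(i)$ must be a best response of player $i$ to the single colour $s(i \ominus 1)$. The first step is therefore to describe, for each node $i$ and each colour $d$ that $i \ominus 1$ might take, the best-response set $\mathrm{BR}_i(d)\subseteq C(i)$ in terms of the sets $A_i\subseteq B_i\subseteq C_i$. Since matching the predecessor's colour $d$ gives player $i$ payoff $\wgtone{i}+\beta(i,d)$ while any other colour $c'$ gives only $\beta(i,c')$, a short case analysis according to whether $d\in B_i$ (matching $d$ strictly beats every non-matching colour), $d\in C_i\setminus B_i$ (matching $d$ ties the largest available bonus), or $d\notin C_i$ (either $d\notin C(i)$, or matching $d$ is strictly dominated by the colours in $A_i$) yields
\[\mathrm{BR}_i(d)=\{d\}\ \text{if}\ d\in B_i,\qquad \mathrm{BR}_i(d)=(\{d\}\cap C_i)\cup A_i\ \text{if}\ d\notin B_i.\]
In particular this recovers the remark that in any \NE player $i$ can only use a colour of $C_i$.

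The second step is to prove, by induction on $i$, the invariant that the set $X_i$ computed by Algorithm~\ref{alg:cycle-ene} equals exactly the set of colours $c$ for which there is an assignment $t$ to the path $0\to 1\to\cdots\to i$ with $t(0)=q(0)$, $t(i)=c$, $t(j)=q(j)$ for every $j\in Q$ with $j\le i$, and $t(j)\in\mathrm{BR}_j(t(j\ominus 1))$ for $1\le j\le i$ --- an ``initial segment of a \NE'' that respects the query. The inductive step reduces to checking that the update rule produces $X_{i\oplus 1}=\bigcup_{d\in X_i}\mathrm{BR}_{i\oplus 1}(d)$, further intersected with $\{q(i\oplus 1)\}$ when $i\oplus 1\in Q$: when $X_i\subseteq B_{i\oplus 1}$ every $d\in X_i$ forces its own repetition, so the union is $X_i$; otherwise some $d^\ast\in X_i\setminus B_{i\oplus 1}$ contributes all of $A_{i\oplus 1}$, and plugging both forms of $\mathrm{BR}_{i\oplus 1}$ into the union shows it equals $(X_i\cap C_{i\oplus 1})\cup A_{i\oplus 1}$. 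The ``return NO'' branch triggered by $q(i\oplus 1)\notin X_{i\oplus 1}$ is then justified by the induction hypothesis: a \NE consistent with $q$ would have $s(i\oplus 1)=q(i\oplus 1)\in\mathrm{BR}_{i\oplus 1}(s(i))$ with $s(i)\in X_i$, contradicting $q(i\oplus 1)\notin\bigcup_{d\in X_i}\mathrm{BR}_{i\oplus 1}(d)$.

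The third step closes the cycle. The last iteration ($i=n-1$, so $i\oplus 1=0$) recomputes $X_0=\bigcup_{d\in X_{n-1}}\mathrm{BR}_0(d)$, and since we may assume $0\in Q$, the algorithm outputs YES iff $q(0)\in X_0$. I will show this is equivalent to the existence of a \NE consistent with $q$. If $q(0)\in X_0$, pick $d\in X_{n-1}$ with $q(0)\in\mathrm{BR}_0(d)$ and take the path assignment $t$ witnessing $d\in X_{n-1}$; then $s:=t$ satisfies $s(j)\in\mathrm{BR}_j(s(j\ominus 1))$ for every $j$, including $j=0$ because $s(0)=q(0)\in\mathrm{BR}_0(s(n-1))$, so $s\in\neG$, and $t$ respects $q$ by construction. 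Conversely, given any \NE $s$ consistent with $q$, the invariant gives $s(j)\in X_j$ for every $j\le n-1$ (in particular the query checks never cause a ``return NO''), and $q(0)=s(0)\in\mathrm{BR}_0(s(n-1))\subseteq X_0$, so the algorithm outputs YES. For the running time, the sets $A_i,B_i,C_i$ are read off from the game in $\calO(nm)$ arithmetic operations (one scan per node for its maximal bonus, then one comparison per available colour), each of the $n$ iterations performs $\calO(m)$ set operations, and a simple cycle has $e=n$ edges, so the total is $\calO(nm)=\calO(|G|)$.

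I expect the main obstacle to be the first step, namely pinning down $\mathrm{BR}_i(d)$ cleanly: the dichotomy ``$d\in B_i$ forces repetition'' versus ``$d\in C_i\setminus B_i$ allows a tie with $A_i$'' hinges on the off-by-one arithmetic separating $B_i=Z_i(\wgtone{i}-1)$ from $C_i=Z_i(\wgtone{i})$, and the boundary cases ($d\notin C(i)$, $\wgtone{i}=0$, $d$ itself attaining the maximal bonus) must all be handled for the subsequent induction and the loop-closing argument to go through mechanically.
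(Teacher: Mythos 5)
Your proposal is correct and follows essentially the same strategy as the paper's proof: forward propagation of the sets $X_i$ around the cycle, an induction showing that any \NE consistent with $q$ satisfies $s(i)\in X_i$, and a closing step at node $0$ using the assumption $0\in Q$. The one genuine difference is in the decomposition: you prove a single exact invariant --- $X_i$ is precisely the set of colours reachable by a query-respecting best-response assignment along the path $0\to\cdots\to i$ --- via the identity $X_{i\oplus 1}=\bigcup_{d\in X_i}\mathrm{BR}_{i\oplus 1}(d)$, from which both directions and the validity of the early ``return NO'' fall out immediately; the paper instead proves only the forward containment $s^*(i)\in X_i$ by case analysis and, for the converse, explicitly builds a witness \NE by a backward pass over $i=n-1,\dots,1$ followed by a three-case check that no player can profitably deviate. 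Your packaging is cleaner and makes the loop-closing step at $i=n-1$ transparent, at the price of having to nail down the characterization $\mathrm{BR}_i(d)=\{d\}$ for $d\in B_i$ and $(\{d\}\cap C_i)\cup A_i$ otherwise. That characterization is exactly right for integer weights, but the boundary concern you flag is real when $w_{i\ominus 1\to i}$ has a nonzero fractional part: if $d\in A_i\setminus B_i$ and $0<w_{i\ominus 1\to i}<1$, matching $d$ strictly dominates the other colours of $A_i$, so $\mathrm{BR}_i(d)=\{d\}$ rather than $\{d\}\cup A_i$. This wrinkle is inherited from the algorithm and the paper's own argument (which at the corresponding step invokes ``because all bonuses are integers''), so it is a shared caveat rather than a gap introduced by your proof.
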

\begin{proof}[sketch]
We argue that Algorithm \ref{alg:cycle-ene} solves the \ene problem
for simple cycles. In other words, we argue that given a simple cycle
over the nodes $V=\{0,\ldots,n-1\}$ and a query $q: Q \to M$, the
output of Algorithm \ref{alg:cycle-ene} is YES iff there exists a Nash
equilibrium $s^*$ which is consistent with $q$. Suppose there exists a
Nash equilibrium $s^*$ which is consistent with $q$. We can argue by
induction on $V$ that on termination of Algorithm \ref{alg:cycle-ene},
for all $i \in V$, we have $s^*(i) \in X_i$.

Conversely, suppose the output of Algorithm \ref{alg:cycle-ene} is
YES. From the definition, this implies that for all $i \in V$,
$X_i \neq \emptyset$ and for all $j \in Q$: $q(j) \in X_j$ (in fact,
$X_j =\{q(j)\}$). We define a Nash equilibrium $s^*$ as
follows. First, let $s^*(0)=q(0)$.  Next we assign values to $s^*(i)$
starting at $i = n-1$ and going down to $i=1$ as described below.
\begin{itemize}
\itemsep0em 
\item If $i \in Q$ then $s^*(i)=q(i)$.
\item If $i \not\in Q$ and $X_{i} \subseteq B_{i\oplus 1}$
  then by Algorithm \ref{alg:cycle-ene} we have $X_{i} = X_{i \oplus 1}$. Let $s^*(i) = s^*(i \oplus 1)$.
\item Assume $i \not\in Q$ and $X_{i} \not\subseteq
  B_{i \oplus 1}$. If $s^*(i\oplus 1) \in X_{i} \cap C_{i \oplus 1}$ set
  $s^*(i)=s^*(i\oplus 1)$. Otherwise $s^*(i\oplus 1) \in A_{i\oplus 1}$ and we set
  $s^*(i)$ to any element in $X_{i} \setminus B_{i\oplus 1}$. 
\end{itemize}
 A proof that $s^*$ is a NE is in the appendix.
\qed
\end{proof}

Algo.~\ref{alg:cycle-ane-weighted} reduces the \ane problem to $m$ \ene queries.

\begin{restatable}{theorem}{anesimplecycles}
\label{thm:ane-simple-cycles}
The \ane problem for simple cycles (unweighted simple cycles) can be solved in $\calO(m|G|)$ time (respectively, $\calO(|G|)$ time using Algorithm \ref{alg:cycle-ane} in the appendix).
\end{restatable}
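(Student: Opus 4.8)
For arbitrary simple cycles, Algorithm~\ref{alg:cycle-ane-weighted} invokes Algorithm~\ref{alg:cycle-ene} at most $m$ times --- once for the \single query $q' = \{0 \to c\}$ for each colour $c \in M$ --- and by Theorem~\ref{thm:ene-simple-cycles} each call runs in $\calO(|G|)$ time. The work done after a call only scans $X_0,\dots,X_{n-1}$ and tests $X_i = \{q(i)\}$ for $i \in Q$, which is $\calO(\sum_i|X_i|) = \calO(nm) = \calO(|G|)$. So the procedure runs in $\calO(m|G|)$ time overall. For unweighted cycles I would instead use the dedicated single-pass procedure (Algorithm~\ref{alg:cycle-ane} in the appendix), treated in the last paragraph.

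\noindent\textbf{Correctness of Algorithm~\ref{alg:cycle-ane-weighted}.} The plan is to split the \ane test by the colour of node~$0$: every NE assigns node~$0$ exactly one colour, so $q$ is consistent with all NEs iff, for every $c \in M$, it is consistent with every NE $s$ having $s(0) = c$. Fix $c$ and run Algorithm~\ref{alg:cycle-ene} on $q' = \{0 \to c\}$. If it returns NO, then by Theorem~\ref{thm:ene-simple-cycles} there is no NE with $s(0) = c$, so $c$ imposes no constraint; this matches the \textbf{continue} branch. If it returns YES, let $X_0, \dots, X_{n-1}$ be the computed sets; the crux is the \emph{tightness} identity
\[
 X_i = \{\, s(i) : s \in \neG,\ s(0) = c \,\} \qquad \text{for all } i \in V .
\]
The ``$\supseteq$'' inclusion is exactly the invariant already proved for Theorem~\ref{thm:ene-simple-cycles} (any NE consistent with $q'$ satisfies $s(i) \in X_i$). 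Given tightness, the NEs with $s(0) = c$ all agree with $q$ iff $X_i = \{q(i)\}$ for every $i \in Q$: if $q(i) \notin X_i$ for some $i \in Q$ (including the case $0 \in Q$ with $c \neq q(0)$), then since a NE with $s(0) = c$ exists, it violates $q$ at $i$; if $q(i) \in X_i$ but $|X_i| \geq 2$, then by ``$\subseteq$'' some NE with $s(0) = c$ picks a colour $\neq q(i)$ at $i$; and if $X_i = \{q(i)\}$ for all $i \in Q$, then by ``$\supseteq$'' every NE with $s(0) = c$ agrees with $q$. This is precisely the per-colour test of Algorithm~\ref{alg:cycle-ane-weighted}, so it answers YES iff every NE is consistent with $q$.

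\noindent\textbf{Proving tightness.} The ``$\subseteq$'' direction --- that the forward sweep of Algorithm~\ref{alg:cycle-ene} does not over-approximate --- is the main obstacle. I would prove it constructively, refining the backward construction from the sketch of Theorem~\ref{thm:ene-simple-cycles}. Two structural facts drive it: (i) whenever the recurrence keeps a colour $v$ in $X_{i \oplus 1}$, $v$ is a best response of player $i \oplus 1$ to \emph{some} colour in $X_i$ --- to $v$ itself when $v \in X_i \cap C_{i\oplus1}$, and to any colour of $X_i \setminus B_{i\oplus1}$ when $v \in A_{i\oplus1}$, that set being non-empty exactly in the branch that adjoins $A_{i\oplus1}$; and (ii) a colour of $A_i$ (maximal bonus) is a best response of player $i$ to every colour a predecessor might play that lies outside $B_i$. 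Fixing $i^\star$ and $v \in X_{i^\star}$, one runs the construction of Theorem~\ref{thm:ene-simple-cycles} with the free choices resolved so as to force $s^\star(i^\star) = v$: where $X_{i^\star} \not\subseteq B_{i^\star \oplus 1}$ the value $v$ can be taken directly (using $X_i \subseteq C_i$), and where $X_{i^\star} \subseteq B_{i^\star \oplus 1}$ the demand must be pushed through the chain of forced copies until it meets a node that is free; the point needing care is that this propagation closes up consistently with $s^\star(0) = c$, and handling it calls for a case analysis on the hierarchy $A_i \subseteq B_i \subseteq C_i$ together with the loop-closing step of Algorithm~\ref{alg:cycle-ene}. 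The resulting profile is a NE by the same best-response accounting as in Theorem~\ref{thm:ene-simple-cycles}, which establishes the inclusion.

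\noindent\textbf{The unweighted case.} When every weight equals $1$ we have $B_i = A_i = Z_i(0)$, so the middle layer of the hierarchy collapses and the test ``$X_i \subseteq B_{i\oplus1}$'' becomes ``$X_i \subseteq A_{i\oplus1}$''. This removes the need to repeat the traversal $m$ times: a single sweep around the cycle, followed by at most one stabilising sweep, computes for each node $i$ the set of colours it can take in \emph{some} NE, and comparing these sets with $q$ --- using, as noted in Section~\ref{sec:prelim}, that a polychromatic \ane query reduces to \single ones --- decides \ane. Algorithm~\ref{alg:cycle-ane} in the appendix carries this out in $\calO(|G|)$ time, its correctness resting on the same tightness phenomenon, now for the unconditioned reachable-colour sets.
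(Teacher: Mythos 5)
Your treatment of the weighted case is essentially the paper's own argument: Algorithm~\ref{alg:cycle-ane-weighted} decomposes $\forall$NE by the colour $c$ of node $0$, calls Algorithm~\ref{alg:cycle-ene} once per colour, and the correctness hinges on the ``tightness'' identity $X_i = \{s(i) : s \in \neG,\ s(0)=c\}$. The paper asserts the $\subseteq$ direction of this identity with about the same level of detail as you do (it says it follows ``from the proof of Theorem~\ref{thm:ene-simple-cycles}'' and ``the interpretation of the set $X(i^*)$''), so your sketch of the forcing construction is, if anything, more explicit about where the work lies. The $\calO(m|G|)$ running-time accounting is also identical. No complaint there.

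The unweighted $\calO(|G|)$ part, however, has a genuine gap. You describe Algorithm~\ref{alg:cycle-ane} as computing ``unconditioned reachable-colour sets'' whose correctness ``rests on the same tightness phenomenon,'' but that is not what the algorithm does: it performs a single sweep seeded with $X_0 = \{q(0)\}$, so every set $X_i$ it computes is conditioned on node $0$ playing $q(0)$. The whole difficulty is to justify a NO answer when the offending situation involves an equilibrium with $s(0) \neq q(0)$, which conditioned tightness says nothing about. The paper's proof handles this by constructing a partial profile $s_1$ backward from the offending node $j$ (with $s_1(j) \in X_j \setminus \{q(j)\}$), extending it forward by best responses to a profile $s_2$, and then arguing that if $s_2$ fails to be a Nash equilibrium the failure must be at node $0$, whereupon one round of best-response improvements around the cycle (this is where unweightedness and the guaranteed existence of NE enter, via Lemma~6 of \cite{ASW15}) converges to an equilibrium $s_3$ with $s_3(0) \neq q(0)$, which violates $q$ at node $0$. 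This loop-closure / best-response-dynamics argument is the actual content of the $\calO(|G|)$ claim and has no counterpart in your proposal; without it, the single conditioned sweep could in principle report NO without any witnessing equilibrium, or the case analysis for equilibria disagreeing with $q$ at node $0$ is simply not covered.
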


\section{Colour Complete Graphs}
\label{sec:complete-graphs}

We show that \ene and \ane problems can be solved in polynomial time 
for coordination games $\mathcal{G}((V,E),C)$ played on unweighted colour complete graphs with $n$ nodes and a fixed number of colours, $m$, and no bonuses.

\begin{theorem} \label{thm:complete}
The \ene and \ane problems for unweighted colour complete graphs and
no bonuses can be solved in $\calO(nm\cdot m!)$ time.
\end{theorem}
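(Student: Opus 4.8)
The plan is to first pin down the structure of Nash equilibria on colour complete graphs and then to enumerate a polynomial-size family of ``canonical'' equilibria, one per ordering of the colours. Since $G$ is colour complete and there are no bonuses, every vertex $i$ that picks a colour $c$ lies in a unique connected component $K_{i,c}$ of $G[V_c]$, which is a complete graph, so $p_i(s)$ is exactly the number of \emph{other} vertices of $K_{i,c}$ that also pick $c$. Writing $\operatorname{rank}_s(i)=p_i(s)+1$ for the size of $i$'s ``crowd'', the Nash condition for $i$ becomes: for every $c\in C(i)$, the number of $j\in K_{i,c}\setminus\{i\}$ with $s(j)=c$ is at most $p_i(s)$. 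The key lemma I would prove is that the crowds of \emph{maximum} rank are saturated: if $\operatorname{rank}_s(i)=R:=\max_j \operatorname{rank}_s(j)$ and $s(i)=c$, then every vertex of $K_{i,c}$ picks $c$ (otherwise a dissenting vertex, applying its Nash condition with colour $c$, would itself have rank $\ge R+1$). Moreover, deleting all maximum-rank components leaves the restriction of $s$ a Nash equilibrium of the induced subgame, with strictly smaller maximum rank, since a deleted vertex can belong neither to a surviving vertex's crowd nor to any alternative crowd it might join. Iterating, every Nash equilibrium decomposes into ``layers'': full components are peeled off, rank level by rank level, from the top.

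For the algorithm, after a one-off computation of the components of each $G[V_c]$ (so that any count $|\{j\in K_{i,c}:s(j)=c\}|$ is available in constant time), I would iterate over all $m!$ linear orders $\pi$ of $M$; intuitively $\pi$ guesses the order of the colours by the largest crowd they attain in the target equilibrium, with ties broken by $\pi$. For a fixed $\pi$ I run a deterministic construction that tries to build a Nash equilibrium consistent with $q$: it initialises the partial profile from $q$, then sweeps from the highest admissible rank downwards; at each rank it commits the components that are \emph{forced} to be full — those all of whose still-uncommitted vertices would, by the Nash inequalities already in force, have no better option — together with, among the components not yet constrained, those picked greedily according to $\pi$; it aborts this $\pi$ as soon as a clash with $q$ is unavoidable, and finally assigns every still-uncommitted vertex its $\pi$-first feasible colour. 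This yields a candidate profile $s_\pi$ in $\calO(nm)$ time, after which I explicitly verify that $s_\pi\in\neG$ and that $s_\pi$ is consistent with $q$, again in $\calO(nm)$; the total is $\calO(nm\cdot m!)$.

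Soundness is immediate from the explicit verification step. For completeness of \ene I would argue: given any Nash equilibrium $s$ consistent with $q$, take $\pi$ to be the colour order induced by $s$; running the construction with this $\pi$, the layered-decomposition lemma guarantees the sweep never aborts and returns \emph{some} equilibrium that is again consistent with $q$, so \ene$(q)$ holds iff some $\pi$ produces a verified $s_\pi$. For \ane$(q)$ I would use that not all equilibria are consistent with $q$ iff some equilibrium assigns a colour $\neq q(i)$ to some $i\in Q$; running the same construction \emph{without} the query produces a family $\{s_\pi\}_\pi$ of equilibria which, again by the layered lemma, realises every colour that any Nash equilibrium assigns to any fixed player, so it suffices to test whether each $s_\pi$ is consistent with $q$ — once more $\calO(nm\cdot m!)$, and non-singleton \ane queries reduce to singleton ones as already noted in the paper.

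The main obstacle is exactly this completeness claim. Because the number of Nash equilibria can be exponential (independent components may choose independently), one must show that the $m!$ canonical profiles already settle both query problems. The crux is to prove that whenever a choice is not pinned down by $q$ during the top-down sweep, resolving it greedily in $\pi$-order cannot destroy a query-consistent (respectively, query-violating) equilibrium; this is where the layering lemma and a careful exchange argument between the constructed profile and a hypothetical witness equilibrium do the work, and getting the definition of ``forced'' exactly right — so that forced commitments never over- or under-commit relative to some witness — is the delicate point.
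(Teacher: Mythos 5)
Your structural observations are sound as far as they go: the saturation lemma (every maximum-rank component is monochromatic in its colour) and the peeling argument are both correct. But the proof has a genuine gap exactly where you flag it: the completeness claim that the $m!$ canonical profiles $s_\pi$ settle both queries is never established, and that claim \emph{is} the theorem. Your layering lemma yields an ordering on \emph{components}, not on \emph{colours}: nothing in the argument rules out that components of colour $x$ and components of colour $y$ are interleaved in the rank order, so nothing shows that a single global total order $\pi$ on $M$ can reproduce a given equilibrium. The sweep itself is also underspecified --- the definition of ``forced'' commitments, the greedy tie-breaking, and the abort condition are left at the level of intention, and you concede that the exchange argument needed to show they never over- or under-commit relative to a witness equilibrium is ``the delicate point''. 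As written, the procedure could miss a query-consistent \NE (breaking \ene) or fail to realise some colour that some \NE assigns to a player in $Q$ (breaking \ane).

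The paper closes precisely this gap with a characterisation your route would also need in some form. For a \NE $s$ it defines $x \succ_s y$ iff some player $i$ with $\{x,y\}\subseteq C(i)$ plays $x$, proves this relation acyclic (Lemma~\ref{lem:acyc} --- morally your saturation lemma, but phrased so that it compares colours directly rather than components), extends it to a total order $\succeq^*_s$, and shows $\SP(\succeq^*_s)=s$, where $\SP(\succeq)$ assigns each player the $\succeq$-maximal colour of $C(i)$ (Lemma~\ref{lem:order-2-ne}). The canonical profile is then a one-line map from orders to profiles --- no sweep, no forcing, no query-dependence in the construction --- and completeness is immediate: every \NE literally equals $\SP(\succeq)$ for some total order $\succeq$, so checking all $m!$ induced profiles for the \NE property and query consistency answers both \ene and \ane. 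To repair your proof you need analogues of these two lemmas; the layering lemma alone does not supply them.
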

\begin{proof}
We claim that the set of total orders on the set of colours induces
a set of joint strategies which contains the whole set $\neG$.
Specifically, every total order $\succeq$ on $M$ will be mapped to a joint strategy
$\SP(\succeq)$ as follows: assign to each player the highest colour
available to him according to the total order $\succeq$.  Formally,
for all players $i$: $\SP(\succeq)(i) = \max_{\succeq} C(i)$.
For any \NE $s$ let us define a relation $\succ_s \subseteq M \times M$: 
$x \succ_s y$ iff there exists player $i$ such that $\{x,y\} \subseteq C(i)$ and $s(i) = x$. 

\begin{restatable}{lemma}{acyc}
\label{lem:acyc}
The relation $\succ_s$ is acyclic, i.e. for all $k \geq 2$
there is no sequence of colours $x_1,\ldots,x_k$ such that
$x_1 \succ_s x_2 \succ_s \ldots \succ_s x_k \succ_s x_1$.
\end{restatable}

Note
Lemma \ref{lem:acyc} may fail when bonuses are introduced into the game. 
We also need the following folk result.

\begin{restatable}{lemma}{acycorder}
\label{lem:acyc-order}
Any acyclic binary relation on a finite set can be extended to a total oder.
\end{restatable}
For the relation $\succ_s$ let $\succeq^*_s$ be a total order from Lemma \ref{lem:acyc-order} such that
$\succ_s\ \subseteq\ \succeq^*_s$.

\begin{restatable}{lemma}{lemordertwone}
\label{lem:order-2-ne}
For any \NE s,  $\SP(\succeq^*_s)$ $= s$. 
\end{restatable}
From Lemma \ref{lem:acyc} and Lemma \ref{lem:order-2-ne} we know that 
for every \NE $s$, there exists at least one total order on $M$
that induces it.
Therefore, for \ene problem (\ane problem) it suffices to check for
all possible total orders $\succeq$ on $M$, whether the induced joint
strategy $\SP(\succeq)$, is a \NE and if so, whether any
(respectively, all) of them is consistent with $q$.  There are $m!$
total orders on $M$. Checking whether an induced strategy profile 
is a \NE consistent with $q$ takes $\calO(nm)$ time. This gives
$\calO(nm\cdot m!)$ in total.
\qed
\end{proof}

Note that there are coordination games on colour complete graphs 
with one-to-one correspondence between 
the set of total orders on colours and the set of all \NEs
(Example \ref{ex:many-NEs} in the appendix), 
and so with exponentially many different NEs.

\section{Directed Acyclic Graphs}
\label{sec:pseudoforests}

In Section~\ref{sec:two-colours} we showed that the \ene and \ane
problems are \NP and \coNP complete respectively even for unweighted DAGs
with out-degree at most two and no bonuses. We now
show that if the out-degree of each node in an unweighted DAG is at most 1
(there are no constraints on the in-degree of nodes) then these
problems can be solved efficiently. 

\begin{algorithm}
\caption{
\label{alg:out1e}
Algorithm for \ene on unweighted DAGs with out-degree $\leq 1$.} 
\KwIn{A coordination game $\mathcal{G}((V,E),C,\beta)$ and query $q : Q \to M$} 
\KwOut{YES if there exists a \NE consistent with $q$ and NO otherwise.} 

Topologically sort $V$ into a sequence $(i_1,\ldots,i_n)$.

\For{$j := 1\ldots n$}{
	$X(i_j):=\emptyset$\\
	$Y := \{X(k)\ |\ k \to i_j \in E \}$\\
	\For{$c \in C(i_j)$}{
		$S := \{Z \in Y\ |\ c \in Z\};\ \ \ $
		$C' := C \setminus\{c\};\ \ \ $ 
		$Y' := Y \setminus S;\ \ \ $\\ 
		\If{exists $c' \in C'$ such that $|S| + \beta(i_j, c) - \beta(i_j, c') < 0$}
		   {{\bf continue} with the next $c$}
		\While{exists $c' \in C'$ such that $|S| + \beta(i_j, c) - \beta(i_j, c') \geq |Y'|$}{
		    $C' := C' \setminus \{c'\};\ \ \ \ $ 
		    $Y' := Y' \setminus \{Z \in Y'\ |\ c' \in Z \}$
		}
		Construct the following bipartite graph
{\openup-1\jot
\setlength{\abovedisplayskip}{0pt}
\setlength{\belowdisplayskip}{0pt}
\setlength{\abovedisplayshortskip}{0pt}
\setlength{\belowdisplayshortskip}{0pt}
\begin{align*}
		G' := (V' = (&Y', \{\{c'\} \times \{1,\ldots,|S| + \\
		&\beta(i_j, c) - \beta(i_j, c') \}\ |\ c' \in C'\}), E') \\
	 	&\text{where } Z\to(c',x) \in E' \text{ iff } c' \in Z
\end{align*}
}
		\If{the maximum bipartite matching in $G'$ has size $|Y'|$}
			{add $c$ to $X(i_j)$}
		}

	\If{$i_j \in Q$}{
	{\bf if} $q(i_j) \not\in X(i_j)$ {\bf return} NO {\bf else} $X(i_j):=\{q(i_j)\}$
	} 
	
}

{\bf return} YES

\end{algorithm}

\begin{restatable}{theorem}{enedagone}
\label{thm:ene-dag-out1}
Algorithm \ref{alg:out1e} solves the \ene problem for unweighted DAGs with out-degree at most one 
in $\calO(|G|^{2.5})$ time.
\end{restatable}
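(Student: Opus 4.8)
The plan is to argue that Algorithm \ref{alg:out1e}, processing nodes in topological order, correctly computes for each node $i_j$ the set $X(i_j)$ of all colours $c$ such that there is a partial strategy profile on the ancestors of $i_j$ (and $i_j$ itself, assigning it $c$) that is "locally stable so far" — meaning every already-processed node with its outgoing edge inside this ancestor set is playing a best response — and moreover is compatible with the query restricted to that ancestor set. Because the graph is a DAG with out-degree at most one, the subgraph induced by the ancestors of $i_j$ together with $i_j$ is a forest oriented toward the roots, and crucially the only constraint a node $i_j$'s choice imposes on the rest of the game is through its unique out-edge; conversely, $i_j$'s own best-response condition depends only on which colours its in-neighbours picked. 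This near-independence is what makes the bottom-up sweep sound. The key correctness claim to establish by induction on $j$ is: $c \in X(i_j)$ after iteration $j$ iff there is a Nash-equilibrium-consistent-so-far assignment to $\{i_1,\dots,i_j\}$ respecting $q$ in which $i_j$ gets colour $c$. Then "\ene returns YES iff a global \NE consistent with $q$ exists" follows by taking $j = n$ (and noting the query check inside the loop prunes exactly the query-violating colours).

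The heart of the inductive step is verifying that the inner loop over $c \in C(i_j)$ correctly decides membership in $X(i_j)$. Fix a candidate colour $c$ for $i_j$. The in-neighbours $k$ of $i_j$ already have their colour-sets $X(k)$ computed; let $Y$ be the multiset of these sets. A colour $c$ is feasible for $i_j$ iff we can pick, for each in-neighbour $k$, a colour from $X(k)$ so that $i_j$ playing $c$ is a best response — i.e. for every competing colour $c'$, the number of in-neighbours that end up matching $c$ plus $\beta(i_j,c)$ is at least the number matching $c'$ plus $\beta(i_j,c')$. I would formalize this as a bipartite feasibility/matching question: in-neighbours whose set contains $c$ may be "assigned to $c$" (the set $S$), the remaining in-neighbours $Y'$ must be distributed among the other colours $c'$, but each $c'$ can absorb at most $|S| + \beta(i_j,c) - \beta(i_j,c')$ of them before it would tie-or-beat $c$ (and if this capacity is already negative, $c$ is immediately infeasible — the "continue"). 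The while-loop handles colours $c'$ whose capacity is so large ($\ge |Y'|$) that they can never be a binding constraint: those are removed, and the in-neighbours that could go there are removed too (they are harmless). What remains is exactly a bipartite graph $G'$ between $Y'$ and capacity-slots of the binding colours, and $c$ is feasible iff all of $Y'$ can be matched — a maximum matching of size $|Y'|$. I would prove both directions: a perfect matching yields an explicit best-response-preserving colouring of the in-neighbours; conversely any feasible colouring of the in-neighbours induces such a matching (possibly after re-routing the "harmless" in-neighbours removed in the while-loop). One must also check the matching choice for $i_j$'s predecessors does not over-constrain things downstream: since each predecessor $k$ has out-degree exactly the single edge to $i_j$, once we commit $k$ to a colour in $X(k)$ it has no further obligations, so the commitment is safe — this is where out-degree $\le 1$ is essential and where a proof with higher out-degree would break.

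For the running-time bound, I would observe: topological sort is $\calO(|G|)$; over the whole run, each edge $k \to i_j$ contributes to $Y$ exactly once, so $\sum_j |Y| = \calO(e)$, and for each of the $\le m$ colours $c$ the constructed graph $G'$ has $\calO(|Y'|)$ left vertices and $\calO(\sum_{c'} (\text{capacity}))$ right vertices — but capacities exceeding $|Y'|$ are pruned, so the right side has size $\calO(m \cdot |Y'|)$ and the edge count is $\calO(m\cdot|Y'|)$ as well. Hopcroft–Karp computes a maximum bipartite matching in $\calO(|E'|\sqrt{|V'|})$, giving $\calO(m|Y'|\cdot\sqrt{m|Y'|})$ per colour; summing the $|Y'|$'s telescopes into the edge count and, bounding $m|Y'| \le |G|$, the total is dominated by $\calO(|G|^{2.5})$ (the bottleneck being a single dense node where $|Y'|$ is large). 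The cleanest way to present this is to note $\sum_j (\text{in-degree of }i_j)^{1.5}$ is maximized when concentrated on one node, giving $\le |G|^{1.5}$, times the $\calO(|G|)$ factor from iterating colours and building edges. I expect the main obstacle to be the matching-equivalence argument — specifically, carefully showing the while-loop's removal of high-capacity colours and their in-neighbours is lossless in both directions, and pinning down the exact invariant $X(i_j)$ maintains so the global induction closes cleanly.
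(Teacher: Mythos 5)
Your proposal follows essentially the same route as the paper: a sweep in topological order maintaining $X(i_j)$ as exactly the set of colours playable by $i_j$ in a query-consistent Nash equilibrium, with the inner bipartite-matching feasibility test justified exactly as you describe, the while-loop's removal of high-capacity colours shown lossless, and out-degree $\leq 1$ used precisely where you say (each predecessor can be committed independently to a colour in its own $X$-set since its only obligation is its single out-edge) --- the paper packages the soundness direction as Lemma~\ref{lm:DAG-out1-ene}, building the partial profile on the ancestor tree of $i_j$ and extending by best responses, and handles the running time by the same concentration-on-one-node (superadditivity) argument you sketch. The one slip is your edge count for $G'$: it is $\calO(m|Y'|^2)$ rather than $\calO(m|Y'|)$, since a left vertex can be adjacent to up to $|Y'|$ capacity slots for each of its colours, but the $\calO(|G|^{2.5})$ bound still goes through.
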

\begin{proof}[sketch] Intuitively, for each node, $i$, we compute the set, $X(i)$, of colours 
that can possibly be assigned to $i$ in any \NE. 
Such a set is trivial to compute for source nodes in $G$, and for the other nodes
it can be computed by constructing a suitable bipartite graph based 
on the sets precomputed for all its neighbours and running a matching algorithm.
In lines 7-10 we remove colours that are dominated by others.
We need the following lemma.
\begin{restatable}{lemma}{enedagonelemma}
\label{lm:DAG-out1-ene}
If Algorithm \ref{alg:out1e} returns YES, then for all $i \in V$, for
all $c \in X(i)$, there exists a Nash equilibrium $s^*$ such that
$s^*_i=c$ and for all $j \neq i$, $s^*_j \in X(j)$.
\end{restatable}
Now, if  Algorithm \ref{alg:out1e} returns YES, then from the
definition, for all $i \in V$, $A_i \neq \emptyset$ and for all $j \in
P$, $A_j=\{q(j)\}$. By Lemma~\ref{lm:DAG-out1-ene} it follows that
there exists a Nash equilibrium $s^*$ which is consistent with $q$.

Conversely, suppose there exists a Nash equilibrium $s^*$ which is
consistent with $q$. Let $\theta=(i_1,\ldots,i_n)$ be the topological
ordering of $V$ chosen in line 1 of Algorithm \ref{alg:out1e}. We
argue that for all $j \in \{1,\ldots,n\}$, $s^*(i_j) \in X(i_j)$. The
claim follows easily for $i_1$. Consider a node $i_m$ and suppose for
all $j < m$, $s^*(i_j) \in X(i_j)$. For $c \in C$, let $N_{i_m}(s^*,c)
=\{i_k \in N_{i_m} \mid s^*(i_k) =c\}$. Since $s^*$ is a Nash
equilibrium, $s^*(i_m)$ is a best response to the choices made by all
nodes $i_k \in N_{i_m}$. This implies that for all $c \neq s^*_{i_m}$,
$|N_{i_m}(s^*,c)| + \beta(i_{j}, c) \leq |N_{i_m}(s^*,s^*_{i_m})| + \beta(i_{j}, s^*_{i_m})$.
Note that $|S| \geq |N_{i_m}(s^*,s^*_{i_m})|$ and so $c$ is not discarded in line 8.
Also, it guarantees the existence of a matching of size $|Y'|$ at line 12 
and thus $s^*(i_m) \in X(i_m)$.

We claim that if the Hopcroft-Karp algorithm is used for each matching at line 11, 
then Algorithm \ref{alg:out1e} runs in $\calO(|G|^{2.5})$. 
First, 
for each node $k$,
$X(k)$ is in $Y$ at most once and so is matched
at most once for each colour.
We claim that the worst case running time is for $|Y| = |V|$. 
Now, due to lines 9-10 we have $|S| + \beta(i_j, c) - \beta(i_j, c') \leq |Y'|$ $=$ $\calO(n)$, 
so $G'$ at line 11 
has $\calO(nm)$ nodes, $\calO(n\cdot nm)$ edges and one matching takes $\calO(\sqrt{nm} \cdot n^2m)$ time.
\qed
\end{proof}

Similarly Algorithm \ref{alg:out1a} solves the \ane problem.

\begin{algorithm}
\caption{\label{alg:out1a}
Algorithm for \ane on unweighted DAGs with out-degree $\leq 1$.} 
\KwIn{A coordination game $\mathcal{G}((V,E),C,\beta)$ and query $q : Q \to M$.} 
\KwOut{YES if all \NEs are consistent with $q$ and NO otherwise.} 

Topologically sort $V$ into a sequence $(i_1,\ldots,i_n)$.

\For{$j := 1\ldots n$}{
	$X(i_j) :=$ the set of colours player $i_j$ can play in any \NE %
	(lines 3-13 of Algorithm \ref{alg:out1e})\\
	\If{$i_j \in Q$ {\bf and} $X(i_j) \neq \{q(i_j)\}$}{
	{\bf return} NO 
	} 

}

{\bf return} YES
\end{algorithm}

\vspace{-0.1em}
\begin{restatable}{theorem}{anedagone}
\label{thm:ane-dag-out1}
Algorithm \ref{alg:out1a} solves the \ane problem for DAGs with out-degree at most one in
$\calO(|G|^{2.5})$ time.
\end{restatable}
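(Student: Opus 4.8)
The plan is to argue that Algorithm~\ref{alg:out1a} is correct by reusing the machinery already developed for the \ene case, in particular Lemma~\ref{lm:DAG-out1-ene} and the ``converse'' analysis in the proof of Theorem~\ref{thm:ene-dag-out1}. First I would observe that lines~3--13 of Algorithm~\ref{alg:out1e}, invoked on line~3 here, compute for each node $i_j$ (processed in topological order) a set $X(i_j)$ with the following intended meaning: after the query is taken into account, $X(i_j)$ is exactly the set of colours that player $i_j$ can be assigned in \emph{some} \NE consistent with $q$ restricted to the already-processed prefix. The key point is that Algorithm~\ref{alg:out1a} never overwrites $X(i_j)$ with $\{q(i_j)\}$ --- it only reads $X(i_j)$ and compares it to $\{q(i_j)\}$ --- so the $X(\cdot)$ sets computed are precisely those of Algorithm~\ref{alg:out1e} run with the \emph{empty} query.

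Next I would establish the two directions. For soundness (if the algorithm returns NO then some \NE violates $q$): if at some node $i_j \in Q$ we have $X(i_j) \neq \{q(i_j)\}$, then because $A_j := \{q(j)\}$-clamping is \emph{not} applied here, there is some colour $c \in X(i_j)$ with $c \neq q(i_j)$ (note $X(i_j) \neq \emptyset$, since an empty $X(i_j)$ would itself have to be handled, and in fact the topological processing guarantees $X(i_j) \ni$ at least one colour whenever a \NE exists; I would note the degenerate case $X(i_j)=\emptyset$ means no \NE exists at all, so the \ane answer is vacuously YES --- this boundary case needs a line of care). By Lemma~\ref{lm:DAG-out1-ene} applied to the run of Algorithm~\ref{alg:out1e} with the empty query, there is a \NE $s^*$ with $s^*(i_j) = c \neq q(i_j)$, hence $s^*$ is not consistent with $q$, so NO is the correct answer. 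For completeness (if every \NE is consistent with $q$ then the algorithm returns YES): take any \NE $s^*$; by the converse argument in the proof of Theorem~\ref{thm:ene-dag-out1}, $s^*(i) \in X(i)$ for every $i$, where $X(i)$ are the sets computed with the empty query. Since by hypothesis $s^*(i) = q(i)$ for every $i \in Q$, and this holds for \emph{every} \NE, every colour in $X(i)$ (for $i \in Q$) is realised by some \NE and hence equals $q(i)$; thus $X(i) = \{q(i)\}$ for all $i \in Q$ (again, barring the no-\NE case), and the algorithm does not return NO at any node, so it returns YES.

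Finally, the running time bound $\calO(|G|^{2.5})$ is immediate: the only nontrivial work is the per-node computation of $X(i_j)$ via lines~3--13 of Algorithm~\ref{alg:out1e}, which is exactly the work analysed in the proof of Theorem~\ref{thm:ene-dag-out1}, and the additional per-node test on lines~4--5 costs $\calO(m)$, absorbed into the bound.

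I expect the main obstacle to be bookkeeping rather than mathematical depth: one must be careful that the $X(\cdot)$ sets here coincide with those in the empty-query run of Algorithm~\ref{alg:out1e} (the clamping on line~14 of Algorithm~\ref{alg:out1e} is what makes ``$X(i)$ under query $q$'' differ from ``$X(i)$ under the empty query'', and Algorithm~\ref{alg:out1a} deliberately omits it), and one must correctly dispatch the corner case where no \NE exists so that \ane is vacuously true --- in that case some $X(i_j)$ becomes empty and, if $i_j \notin Q$, the algorithm still returns YES as desired, while if the emptiness first manifests at an $i_j \in Q$ we likewise return YES, consistent with vacuous truth; I would spell this out in the appendix proof. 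The full proof is deferred to the appendix.
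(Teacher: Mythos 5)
Your proof is correct and follows essentially the same route as the paper's: it reuses the unclamped $X(\cdot)$ sets (the paper packages the key fact as Lemma~\ref{lm:DAG-out1-ane}, which it says ``essentially follows from the proof of Lemma~\ref{lm:DAG-out1-ene}'') to produce a violating \NE in the NO direction, uses the membership argument $s^*(i)\in X(i)$ for every \NE in the YES direction, and inherits the running-time analysis from Theorem~\ref{thm:ene-dag-out1}. One minor blemish: your empty-$X$ corner case is vacuous here, since a \NE always exists on a DAG (best-respond in topological order) and hence the unclamped sets are never empty, and in any event the algorithm as written would return NO, not YES, if some $X(i_j)=\emptyset$ with $i_j\in Q$.
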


\section{Conclusions}
\label{sec:conclusions}
We presented a simple class of coordination games on directed graphs.
We focused on checking whether a given partial colouring of a subset
of the nodes is consistent with some pure \NE or, alternatively, with
all pure \NEs.  We showed these problems to be \NP-complete and
\coNP-complete, respectively, in general.  However, we also
identified several natural cases when these decision problems 
are tractable.

In the case of weighted DAGs with out-degree at most one and 
colour complete graphs with no bonuses
a simple reduction from the {\sc Partition} problem and its complement, shows \NP and \coNP-hardness of their \ene and \ane problems, respectively.
This does not exclude the
possibility that pseudo-polynomial algorithms exist for these problems.
We conjecture that even for unweighted 
colour complete graphs
these problems are NP/coNP-hard 
in the presence of bonuses or when the set of colours, $M$, is not fixed.

There are several ways our results can be extended further. 
One is to study other constraints, e.g. uniqueness of \NE or checking maximum payoff for a given player.
Another is to look at different solution concepts, e.g. strong equilibria.
And yet another is to look for more classes of graphs that can be
analysed in polynomial time.  Given that these decision problems are
already computationally hard for DAGs with three colours, the
possibilities for such new classes are rather limited.

Finally, we only focused on pure \NEs in this paper, which may not exist for general graphs.
On the other hand, mixed \NEs always exist due to Nash's theorem.
It would be interesting to know 
whether the complexity of finding one is
{\sc PPAD}-complete problem just like it is
for general polymatrix games \cite{CD11}.
\bibliographystyle{aaai}
\bibliography{e,clustering,extrabib}

\appendix
\clearpage

{\LARGE \bf Appendix}

\section{Full algorithms}

\setcounter{algocf}{1}
\begin{algorithm}%
\caption{
Algorithm for \ane on arbitrary graphs with two colours and monochromatic queries.} 
\KwIn{A coordination game $\mathcal{G}((V,E),w,C,\beta)$ and monochromatic query $q : Q \to M$.}
\KwOut{YES if all \NEs are consistent with $q$ and NO otherwise.} 

\For{$i \in V$}{
  {\bf if} {$1 \in C(i)$} {\bf then} {$s(i) = 1$} {\bf else} {$s(i) = 0$} 
}
{\bf set} $\mathcal{S} := \{i\ |\ s(i) = 0\}$ 

\While{$\mathcal{S} \neq \emptyset$}{
   $\text{remove any element from } \mathcal{S}$ {and assign it to $i$}\\
   \For{$\{j \in V$ | $i \to j \in E\}$}{
      \If{$s(j) = 1$ {\bf and} $0 \in C(j)$ {\bf and} $p_j((0,s_{-j}))>p_j(s)$}
      {    $s(j) = 0$ \\
           $\text{add $j$ to }\mathcal{S}$
      }     
   }
}   

{\bf if} $\forall_{i\in Q}\ s(i) = 0$ {\bf return} YES {\bf else} {\bf return} NO

\end{algorithm}

\setcounter{algocf}{3}

\setcounter{algocf}{5}
\begin{algorithm}%
\caption{
Algorithm for \ane on unweighted DAGs with out-degree $\leq 1$.} 
\KwIn{A coordination game $\mathcal{G}((V,E),C,\beta)$ and query $q : Q \to M$} 
\KwOut{YES if all \NEs are consistent with $q$ and NO otherwise.} 

Topologically sort $V$ into a sequence $(i_1,\ldots,i_n)$.

\For{$j := 1\ldots n$}{
	$X(i_j):=\emptyset$\\
	$Y := \{X(k)\ |\ k \to i_j \in E \}$\\
	\For{$c \in C(i_j)$}{
		$S := \{Z \in Y\ |\ c \in Z\};\ \ \ $
		$C' := C \setminus\{c\};\ \ \ $ 
		$Y' := Y \setminus S;\ \ \ $\\ 
		\If{exists $c' \in C'$ such that $|S| + \beta(i_j, c) - \beta(i_j, c') < 0$}
		   {{\bf continue} with the next $c$}
		\While{exists $c' \in C'$ such that $|S| + \beta(i_j, c) - \beta(i_j, c') \geq |Y'|$}{
		    $C' := C' \setminus \{c'\};\ \ \ \ $ 
		    $Y' := Y' \setminus \{Z \in Y'\ |\ c' \in Z \}$
		}
		Construct the following bipartite graph
\begin{align*}
		G' := (V' = (&Y', \{\{c'\} \times \{1,\ldots,|S| + \\
		&\beta(i_j, c) - \beta(i_j, c') \}\ |\ c' \in C'\}), E') \\
	 	&\text{where } Z\to(c',x) \in E' \text{ iff } c' \in Z
\end{align*}
		\If{the maximum bipartite matching in $G'$ has size $|Y'|$}
			{add $c$ to $X(i_j)$}
		}
	\If{$i_j \in Q$ {\bf and} $X(i_j) \neq \{q(i_j)\}$}{
	{\bf return} NO 
	} 
}

{\bf return} YES
\end{algorithm} 

\begin{algorithm}
\caption{
\label{alg:cycle-ane}
Algorithm for \ane on an unweighted simple cycle.}  
\KwIn{A simple cycle on nodes $V=\{0,\ldots,n-1\}$, sets $A_i$, $B_i$, $C_i$ for $i \in V$, and a query $q : Q \to M$.}
\KwOut{YES if all NEs are consistent with $q$ and NO otherwise or if no NE exists.}

Let $X_0 = \{q(0)\}$.

\For{$i = 0$ {\bf to} $n-1$}{
  \If{$X_i \not\subseteq B_{i\oplus 1}$}
     {$X_{i\oplus1} = (X_i \cap C_{i\oplus 1}) \cup A_{i\oplus 1}$}
  \Else{
     {$X_{i\oplus1} = X_i$}     
  }
  \If{$i\oplus 1 \in Q$}{
	  \If{$\{q(i\oplus 1)\} \neq X_{i\oplus1}$}
		  {{\bf return} NO}
  }

}

{\bf return} YES

\end{algorithm}

\section{Full proofs of lemmas and theorems}
\begin{figure}%
\centering
\tikzstyle{agent}=[circle,draw=black!80,thick, minimum size=1.3em]
\begin{tikzpicture}[auto,>=latex',shorten >=1pt,on grid,scale=1]
\begin{scope}
\node[agent,label=below:{\small $\{a\}$}](x1){};
\node[agent, left of=x1, node distance=1.5cm,label=below:{\small $\{a,b\}$}](x2){};
\node[agent, left of=x2, node distance=1.5cm,label=below:{\small $\{b\}$}](x3){};
\draw[->] (x1) to [bend left] (x2);
\draw[->] (x2) to [bend left] (x1);
\draw[->] (x2) to [bend left] (x3);
\draw[->] (x3) to [bend left] (x2);
\end{scope}
\end{tikzpicture}
\caption{\label{fig:colour-complete-graph}
A graph which is colour complete, but is not a complete graph (a clique).
}
\end{figure}
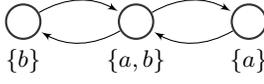

\twocolourmonoene*
\begin{proof}
We show that Algorithm \ref{alg:2c-ene} solves the \ene problem and that
its running time is $\calO(|G|)$.
Let $\preceq$ be a partial order on all joint strategies
$s : V \to M$ defined as follows: $s \preceq s'$ iff for all $i \in
V$, $s(i) \leq s'(i)$.  Let $s_0$ denote the value of $s$ once line 3
is reached.  The colouring $s_0$ may not be a \NE, so
Algorithm \ref{alg:2c-ene} tries to correct this with the minimum
number of switches from $0$ to $1$.  Note that for any colouring
$s$ we have $s_0 \preceq s$.  Note that lines 3-9 of
Algorithm \ref{alg:2c-ene} can be seen as a function $F : (V \to
M) \to (V \to M)$ from the initial colouring, in this case $s_0$, to a
new colouring, $F(s_0)$.  Note that $F$ is monotonic according to
$\preceq$, i.e. if $s \preceq s'$ then $F(s) \preceq F(s')$. This is
simply because the more colour~$1$ is used initially, the more players
would like to switch to it.  Also, any \NE is a fixed point of $F$,
because no player would like to switch at line 7.
We now need the following lemma. %
\begin{restatable}{lemma}{lmcolourext}
\label{lm:colour-ext}
For every joint strategy $s$, $F(s)$ is a \NE.
\end{restatable}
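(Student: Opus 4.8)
The plan is to verify the \NE condition for $F(s)$ one node at a time. With only two colours the possible unilateral deviations come in two flavours --- a node currently using colour $1$ switching to $0$, and a node currently using colour $0$ switching to $1$ --- and I would treat these separately. Almost everything rests on two elementary monotonicity observations about the run of lines 3--9: the loop only ever recolours a node from $0$ to $1$, and once recoloured a node is never popped as the current ``$i$'' again (it can only be inspected as a successor), so each node changes colour at most once, the loop terminates, and for every node $j$ the set of its predecessors currently coloured $1$ grows monotonically; consequently, evaluated against the running colouring, $p_j((1,\cdot))$ is non-decreasing and $p_j((0,\cdot))$ is non-increasing over time.

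For the nodes ending with colour $1$: if such a node $j$ has $0\notin C(j)$ it has no other strategy, so assume $0\in C(j)$. Then $j$ must have been recoloured by the loop --- in the starting colouring $s_0$ produced by lines 1--2 every node with $0\in C(j)$ begins at $0$ (and more generally it is enough that every colour-$1$ node of $s$ is best-responding in $s$). At the moment $j$ was recoloured, the guard at line 7 held, so switching to $1$ was \emph{strictly} better for $j$ than colour $0$ against the colouring present then; by the monotonicity observations, $p_j((1,\cdot))$ has only increased and $p_j((0,\cdot))$ has only decreased since, so in $F(s)$ we still have $p_j((1,F(s)_{-j})) > p_j((0,F(s)_{-j}))$, and $j$ has no incentive to revert to $0$.

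For the nodes ending with colour $0$: suppose towards a contradiction that some such node $j$ (necessarily with $1\in C(j)$) would strictly gain by moving to $1$. I would look at the \emph{last} time $j$ is inspected by the inner \textbf{for}-loop, i.e.\ as a successor of a node just popped from $\mathcal{S}$. At that inspection $j$ still holds colour $0$ and, since it was not recoloured, the guard at line 7 failed, so switching to $1$ was not profitable for $j$ against the colouring present at that instant. The crucial claim is that no predecessor of $j$ changes colour after that instant: any predecessor $k$ recoloured later would be pushed onto $\mathcal{S}$ and eventually popped, causing $j$ to be inspected again, contradicting maximality. Hence the colouring restricted to the predecessors of $j$ coincides at that instant with the one in $F(s)$, so $j$ in fact has no profitable move to $1$ in $F(s)$ --- a contradiction. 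The remaining corner case, that $j$ is never inspected at all, forces every predecessor of $j$ to keep colour $0$ throughout, and a short direct comparison against the initial colouring then rules out a profitable move to $1$.

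Combining the two cases, $F(s)$ admits no profitable deviation, i.e.\ $F(s)\in\neG$; in particular $F(s_0)$ --- the colouring used in Theorem~\ref{thm:2colour-mono-ene} --- is a \NE. I expect the book-keeping in the colour-$0$ case to be the genuine obstacle: one must pin down precisely what ``the last inspection of $j$'' means, given that inspections of $j$ occur both across different pops of $\mathcal{S}$ and, within a single pop, as the inner loop may already have rewritten the colours of other successors, and then argue cleanly that the in-neighbourhood of $j$ is frozen from that point on. The colour-$1$ case, by contrast, is just a single application of the monotone-improvement observation.
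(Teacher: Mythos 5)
Your argument is correct and follows essentially the same route as the paper's proof: monotonicity and termination of the recolouring, preservation of the strict improvement for nodes that switched to $1$, and the observation that a node ending at $0$ is re-examined after every recolouring of a predecessor, so its last failed check at line~7 is evaluated against its final in-neighbourhood. If anything you are more careful than the paper on the two points it elides --- the lemma for \emph{arbitrary} $s$ genuinely needs your side condition that the colour-$1$ nodes of $s$ are already best-responding (a node starting at $1$ that prefers $0$ is never revisited, so the bare statement ``for every joint strategy $s$'' is too strong), and your ``predecessors are frozen after the last inspection'' claim is exactly the justification the paper leaves implicit --- while your never-inspected corner case inherits the same thinness as the paper's own treatment, since with a large bonus $\beta(j,1)$ a node with no colour-$1$ predecessor could still prefer $1$.
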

\begin{proof}
Every node with colour $1$ in $F(s)$ is added to the set $\mathcal{S}$ at most once:
either at the beginning %
or when it switches from $0$ to $1$. 
If a node does not have a predecessor with colour $1$, 
it cannot possibly have an incentive to switch to $1$, because
this would give him reward $0$.
Every time a predecessor of a node switches to $1$, 
we consider that node in line 7 and whether it is beneficial 
for it to switch to $1$.
If at no point it was, then colour $0$ has to be this player's best response in $F(s)$.
Also, no player can have an incentive to switch back from $1$ to $0$ because
the payoff for choosing $1$ is weakly increasing for every player after each strategy update.
\qed
\end{proof}

Now, if Algorithm \ref{alg:2c-ene} returns YES, then the correctness follows
from Lemma~\ref{lm:colour-ext}. Since in this case, $F(s_0)$ is
consistent with $q$ and by Lemma~\ref{lm:colour-ext} it is a \NE.  
Conversely, if Algorithm \ref{alg:2c-ene} returns NO then there exists 
$i\in Q$ such that $F(s_0)(i) = 1$.  Suppose there is a \NE $s'$
consistent with $q$.  Then $s_0 \preceq s'$ and $F(s_0) \preceq F(s')
= s'$, but $s'(i) = q(i) = 0$; a contradiction.

To analyse its computational complexity,
note that each node can be added to the set $\mathcal{S}$ at most once, 
because the colour of each node changes at most once
and so each edge is considered at most once as well.
Moreover, we can compute $p_j((1,s_{-j}))$ and $p_j(s)$ in constant time, 
by storing for each node the sum of weights of edges from neighbours with colour 1.
Every time the colour of a node $j$ changes in line 8, 
for any neighbour $i$ of $j$ we add the weight of the edge leading from $j$ to $i$ to the stored value for node $i$; we need to make such an update $\calO(e)$ times in total. Thus the total complexity of this algorithm is $\calO(n+e)$.\qed\end{proof}

\thmtwocolourmonoane*
\begin{proof}
Let $s_0$ be the joint strategy defined by lines 1--2 in Algorithm
\ref{alg:2c-ane}. By an argument very similar to the proof of Lemma 1, we
can show that $F(s_0)$ is a Nash equilibrium. If Algorithm
\ref{alg:2c-ane} returns NO then there exists a $j \in Q$ such that
$F(s_0)(j) \neq q(j)$. Therefore, $F(s_0)$ is a Nash equilibrium which
is not consistent with $q$.

To show the converse, as in the proof of
Theorem~\ref{thm:2colour-mono-ene}, we define a partial
order $\preceq$ on joint strategies as before. 
Note that for any joint strategy $s$ we have $s \preceq s_0$. 
Again, note that lines 3--9 of Algorithm \ref{alg:2c-ane} define a
function $F:(V \to M) \to (V \to M)$ which satisfies the property: if
$s \preceq s'$ then $F(s) \preceq F(s')$.

Now suppose that Algorithm \ref{alg:2c-ane} returns YES then for all $i
\in Q$: $F(s_0)(i)=0$. We need to prove that every Nash equilibrium
is consistent with $q$. Suppose this is not the case, then there
exists a Nash equilibrium $s'$ and a node $j \in Q$ such that $s'(j)
\neq q(j)$. By our assumption, this implies that $s'(j)=1$. We have
$s' \preceq s_0$ and therefore $s' = F(s') \preceq F(s_0)$. From
$s'(j)=1$ and $F(s_0)(j)=0$ we get a contradiction.

The time complexity analysis of Algorithm~\ref{alg:2c-ane} is the same as
that of Algorithm~\ref{alg:2c-ene}.  
\qed
\end{proof}

\twocolourpolyane*
\begin{proof}
Let $q: Q \to M$ be a polychromatic query.
Define $P_0$ and $P_1$ to be the sets of players asked to
pick $0$ and $1$, respectively, by $q$.
Formally, $P_0 = \{i \in Q \mid q(i) = 0\}$ and $P_1 = \{i \in Q \mid q(i) = 1\}$.
Let $q_0 = q|_{P_0}$ and $q_1 = q|_{P_1}$.
It can be verified that
every Nash equilibria is consistent with $q$ iff
every Nash equilibria is consistent with $q_0$ and $q_1$. Note that both $q_0$ and $q_1$
are monochromatic queries and therefore, by
Theorem~\ref{thm:2colour-mono-ane}, both of them can be answered in 
$\calO(|G|)$ time. Thus the claim follows.\qed
\end{proof}

\propgadget*
\begin{proof}
{\bf (a)} If $\exists_i\ s(X_i) = x$ then player $Y$'s payoff 
for picking $x$ is at least $k$ and for $\neg x$ is at most $k$, 
so it has to be $s(Y) = x$.
On the other hand, if $\forall_i\ s(X_i) \neq x$ then 
player $Y$'s payoff for picking $x$ is $k-1$
and for picking $\neg x$ is $k$, so it has to be $s(Y) \neq x$.

\noindent {\bf (b)} If $\forall_i\ s(X_i) = \neg x$ then player $Y$'s payoff 
for picking $x$ is $k-1$ and for $\neg x$ is $k$, 
so it has to be $s(Y) = \neg x$.
On the other hand, if $\exists_i\ s(X_i) \neq x$ then 
player $Y$'s payoff for picking $x$ is at least $k$ 
and for picking $\neg x$ is at most $k-1$, so it has to be $s(Y) \neq x$.\qed
\end{proof}

\DAGtwo*
\begin{proof}
  To prove \coNP-hardness we provide a reduction from the tautology problem for formulae in 3-DNF form,
  which is \coNP-complete.  Assume we are given a 3-DNF formula
\[\phi = (a_1 \wedge b_1 \wedge c_1) \vee (a_2 \wedge b_2 \wedge c_2) \vee \ldots \vee (a_k \wedge b_k \wedge c_k)\]
with $k$ clauses and $n$ propositional variables $x_1, \ldots, x_n$,
where each $a_i,b_i,c_i$ is a literal equal to $x_j$ or $\lnot x_j$
for some $j$. We will construct a coordination game $\game$ of size
$\mathcal{O}(n+k)$ such that a particular \single \ane query is true for $\game$ 
iff $\phi$ evaluates to true for all truth assignments.

First %
for every propositional variable $x_i$ there are four
nodes $X_i$, $\neg X_i$, $L_i$, $\overline{L}_i$ 
in $\game$, each with two possible colours $\top$ or $\bot$.
We connect these four nodes using gadgets
$D(X_i, \neg X_i, \top; L_i)$ and 
$D(X_i, \neg X_i, \bot; \overline{L}_i)$.
This makes sure that in any \NE, s, 
we have $s(L_i) = \top$ and $s(\overline{L}_i) = \bot$ 
iff $X_i$ and $\neg X_i$ are assigned different colours.

Next, for every clause $(a_i \vee b_i \vee c_i)$ in $\phi$ we
add to the game graph $\game$ node $C_i$. 
We use gadget $D(a_i, b_i, c_i, \bot; C_i)$ to connect 
literals with clauses, 
where we identify each $x_i$ with $X_i$ and 
each $\neg x_i$ with $\neg X_i$.
Note that Proposition \ref{prop:gadget} implies that
the colour of $C_i$ is $\top$ iff all nodes $a_i, b_i, c_i$ are assigned $\top$.

We add two nodes $T$ and $F$ to gather 
colours $\top$ and $\bot$ from the $L_i$ and $\overline{L}_i$ nodes.
Also, we add an additional node $\Phi$ to gather the values of all the clauses.
We connect these using gadgets 
$D(L_1,\ldots,L_n,\bot; T)$,
$D(\overline{L}_1,\ldots,\overline{L}_n,\top; F)$,
and
$D(C_1,\ldots,C_k,\top; \Phi)$.
The first two gadgets guarantee that if in a \NE $s$ the colour of $T$ is $\top$ 
and the colour of $F$ is $\bot$ then
$s$ corresponds to a valid truth assignment.
The last gadget guarantees that the colour of $\Phi$ is $\top$ iff
at least one of $C_i$-s has colour $\top$.

Now, we need to express that for every \NE $s$:
 $s(T) = \top$ and $s(F) = \bot$ implies that $s(\Phi) = \top$.
We will use gadget from Figure~\ref{fig:gadget3}.
It consists of the three nodes $T, F, \Phi$ that we already defined in $\game$
and several additional ones. We claim that \ane query $q(Z) = \star$ is true
for $\game$ iff $\Phi$ is a tautology.
However, equivalently, we will prove that
\ane query $q(Z) = \star$ is false
for $\game$ iff $\phi$ is not a tautology. 

\smallskip \noindent ($\Rightarrow$) 
Let $s$ be a \NE which does not satisfy query $q(Z) = \star$, 
which essentially means that $s(Z) = \bot$.
We will show that the following truth assignment 
$\nu(x_i) = s(X_i)$ makes $\phi$ false.
Looking at the gadget in Figure~\ref{fig:gadget3} we can 
easily deduce that all nodes $W,X,Y$ are assigned $\bot$ in $s$, because
otherwise $Z$ would have an incentive to switch to $\star$.
This means that it has to be 
$s(X) = s(\Phi) = \bot$, 
and $s(U) = \top$ so $s(T) = \top$.
Next, $s(T) = \top$ implies that $s(L_i) = \top$ for all $i$ and
$s(F) = \bot$ implies that $s(\overline{L}_i) = \bot$ for all $i$, 
so $s(\neg X_i) = \neg s(X_i)$ for all $i$.
Finally, $s(\Phi) = \bot$ implies that $s(C_i) = \bot$ for all $i$,
but then $\nu$ makes every clause in $\phi$ false, 
and so also makes the whole formula $\phi$ false.

\smallskip \noindent ($\Leftarrow$) 
Let $\nu :\{x_1, \ldots, x_n\} \to \{\top,\bot\}$
be a truth assignment that makes $\phi$ false.
We form the following \NE, $s$, by first setting 
$s(X_i) = \nu(x_i)$ and $s(\neg X_i) = \neg \nu(x_i)$ for all $i$.
Note that this makes the best response of nodes $L_i$ to be
$\top$ and of nodes $\overline{L}_i$ to be $\bot$.
It follows that the best responses of $T$ and $F$ are $\top$ and $\bot$, 
respectively.
On the other hand, since $\nu$ makes $\phi$ false, 
all clauses $C_1, \ldots, C_n$ in $\phi$ are false,
and so for all $i$: $s(C_i) = \bot$ is $C_i$'s best response.
Finally, the best response of node $\Phi$ is $\bot$.
Looking at the gadget in Figure~\ref{fig:gadget3},
given the values $s(T) = \top, s(F) = s(\Phi) = \bot$, 
one can easily see that 
$s(U) = \top$, $s(W) = s(X) = s(y) = s(Z) = \bot$
are these nodes best responses.
Therefore, $s$ is a \NE which does not satisfy query $q(Z) = \star$.
\qed
\end{proof}

\thmDAG*
\begin{proof}
  To prove NP-hardness we provide a reduction from the 3-SAT problem,
  which is NP-complete.  Assume we are given a 3-SAT formula
\[\phi = (a_1 \vee b_1 \vee c_1) \wedge (a_2 \vee b_2 \vee c_2) \wedge \ldots \wedge (a_k \vee b_k \vee c_k)\]
with $k$ clauses and $n$ propositional variables $x_1, \ldots, x_n$,
where each $a_i,b_i,c_i$ is a literal equal to $x_j$ or $\lnot x_j$
for some $j$. We will construct a coordination game $\game$ of size
$\mathcal{O}(n+k)$ such that a particular \ene query is true for $\game$ 
iff $\phi$ is satisfiable.

First, for every propositional variable $x_i$ there are four
nodes $X_i$, $\neg X_i$, $L_i$, $\overline{L}_i$ 
in $\game$, each with two possible colours $\top$ or $\bot$.
Intuitively, for a given truth assignment, if $x_i$ is true then
$\top$ should be chosen for $X_i$ 
and $\bot$ should be chosen for $\neg X_i$, 
and the other way around if $x_i$ is false.
To select only the \NEs which correspond to valid truth assignments 
we make use of the gadget $D$ presented in Figure~\ref{fig:gadget}.
We connect these four nodes using gadgets
$D(X_i, \neg X_i, \top; L_i)$ and 
$D(X_i, \neg X_i, \bot; \overline{L}_i)$.
This make sure that in any \NE, s, 
we have $s(L_i) = \top$ and $s(\overline{L}_i) = \bot$ 
iff $X_i$ and $\neg X_i$ are assigned different colours.
This is because from Proposition \ref{prop:gadget} it follows that 
if $s(L_i) = \top$ then 
$\top$ is assigned to at least one of $X_i,\neg X_i$ and 
if $s(\overline{L}_i) = \bot$ then 
$\bot$ is assigned to at least one of them as well.
So necessarily, $\top$ and $\bot$ are assigned to exactly one of them.

Next, for every clause $(a_i \vee b_i \vee c_i)$ in $\phi$ we
add to the game graph $\game$ node $C_i$. 
We use gadget $D(a_i, b_i, c_i, \top; C_i)$ to connect 
literals with clauses, 
where we identify each $x_i$ with $X_i$ and 
each $\neg x_i$ with $\neg X_i$.
Note that Proposition \ref{prop:gadget} implies that
$s(C_i) = \top$ iff at least one of nodes $a_i, b_i, c_i$ 
is assigned $\top$.

Finally, we have two nodes $T$ and $F$ which gather 
all nodes whose colours should be $\top$ and $\bot$, respectively.
We connect these using gadgets 
$D(L_1,\ldots,L_n,C_1,\ldots,C_k,\bot; T)$
and
$D(\overline{L}_1,\ldots,\overline{L}_n,\top; F)$.

We claim that \ene query $q(T) = \top, q(F) = \bot$ is true
for $\game$ iff $\phi$ is satisfiable.
 
\smallskip \noindent ($\Rightarrow$) Assume that $s$ is a \NE
consistent with $q$ in the game $\game$.  We claim that the truth
assignment $\nu :\{x_1, \ldots, x_n\} \to \{\top,\bot\}$ that assigns
$\nu(x_i) = \top$ iff $s(X_i) = \top$, 
and $\nu(x_i) = \bot$ iff $s(\neg X_i) = \top$, 
makes $\phi$ true.  

Since $s$ is a \NE and $s(T) = \top$, Proposition \ref{prop:gadget} implies
that all $L_i$-s and $C_i$-s are assigned colour $\top$.
Similarly, $s(F) = \bot$ implies that all $\overline{L}_i$-s
are assigned colour $\bot$.
But this means that the assignment of the colours to 
$X_i$-s and $\neg X_i$-s corresponds to a valid truth assignment.
Furthermore, for any
$i \in \{1, \LL, k\}$: $s(C_i) = \top$ implies that 
at least one of the literals $a_i$, $b_i$, $c_i$ is assigned $\top$. 
Therefore $\nu$ makes every clause $C_i$ true and
so the whole formula $\phi$ true as well.

\smallskip \noindent ($\Leftarrow$) Assume $\phi$ is satisfiable. Take
a truth assignment $\nu :\{x_1, \ldots, x_n\} \to \{\top,\bot\}$ that
makes $\phi$ true. We will construct a \NE $s$ consistent with $q$.
For all $j$, if $\nu(x_j)$ is true then assign 
$s(X_i) = \top$, $s(\neg X_i) = \bot$, 
and if $\nu(x_j)$ is false then assign
$s(X_i) = \bot$, $s(\neg X_i) = \top$.
It follows that if we assign
$s(L_i) = \top$, $s(\overline{L}_i) = \top$ for all $i = 1,\ldots,n$
then $L_i$ and $\overline{L}_i$ have no incentive to switch.
Furthermore, because $\nu$ makes every clause $C_i$ true, 
$\top$ is assigned in $s$ to at least one of the nodes $a_i$, $b_i$, $c_i$,
so if we set $s(C_i) = \top$ for all $i = 1,\ldots,k$, 
then no $C_i$ has an incentive to switch.
Finally, setting
$s(T) = \top$ and $s(F) = \bot$, 
makes $s$ consistent with $q$ and 
neither $T$ nor $F$ has an incentive to switch.
\qed
\end{proof}

\begin{figure}%
\centering
\tikzstyle{agent}=[circle,draw=black!80,thick, minimum size=2em,scale=0.9]
\begin{tikzpicture}[auto,>=latex',shorten >=1pt,on grid]
\begin{scope}
\node[agent,label=above:{\small $\{\bot\}$}](x1){};
\node[agent, right of=x1, node distance=1.5cm,label=above:{\small $\{\top, \bot\}$}](x2){T};
\node[agent, right of=x2, node distance=1.5cm,label=above:{\small $\{\star\}$}](x3){};
\node[agent, right of=x3, node distance=1.5cm,label=above:{\small $\{\top, \bot\}$}](x4){F};
\node[agent, right of=x4, node distance=1.5cm,label=above:{\small $\{\top$\}}](x5){};
\node[agent, below of=x2, node distance=1.5cm,label=below:{\small $\{\bot,\star\}$}](y1){X};
\node[agent, below of=x3, node distance=1.5cm,label=above:{\small $\{\top\}$}](y2){};
\node[agent, below of=x4, node distance=1.5cm,label=below:{\small $\{\top,\star$\}}](y3){Y};
\node[agent, below of=y2, node distance=1.5cm,label=below:{\small $\{\top,\star$\}}](z1){Z};
\draw[->] (x1) to (y1);
\draw[->] (x2) to (y1);
\draw[->] (x3) to (y1);
\draw[->] (x3) to (y3);
\draw[->] (x4) to (y3);
\draw[->] (x5) to (y3);
\draw[->] (y1) to (z1);
\draw[->] (y2) to node [left] {{\small $2$}} (z1);
\draw[->] (y3) to (z1);
\end{scope}
\end{tikzpicture}
\caption{\label{fig:gadget2}
Gadget used in the \NP-hardness proof of \ene. Note that there is one edge with weight 2, which can be easily simulated by unweighted edges.}
\end{figure}
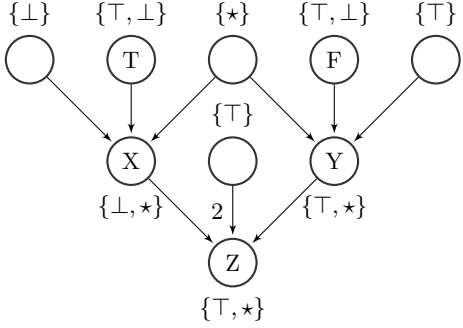
\corDAG*
\begin{proof}
To prove NP-hardness we again reduce from the 3-SAT problem.
Assume we are given a 3-SAT formula $\phi$. 
In Theorem \ref{thm:DAG} we constructed a game $\game$ for which
$\phi$ is satisfiable iff \ene query $q(T) = \top, q(F) = \bot$ is true for $\game$, 
where $T$ and $F$ are two nodes of $\game$.
We now combine this reduction with the gadget depicted in Figure~\ref{fig:gadget}, 
which consists of several nodes including nodes $T$ and $F$ from $\game$,
to form a new game $\game'$. We claim that a \single query $q(Z) = \star$ is true in $\game'$
iff $\phi$ is satisfiable. 

\smallskip \noindent ($\Rightarrow$) 
Let $s$ be a \NE satisfying $s(Z) = \star$.
Notice that based on the structure of the gadget, 
$s(Z) = \star$ implies that $s(X) = s(Y) = \star$, 
which implies that $s(T) = \top$ and $s(F) = \bot$. 
We already know that this implies that $\phi$ is satisfiable.

\smallskip \noindent ($\Leftarrow$) 
If $\phi$ is satisfiable then there exists a \NE $s$ in $\game$ such that
$s(T) = \top$ and $s(F) = \bot$. 
Notice that $s$ can easily be extended to a \NE $s'$ in $\game'$ by
setting $s'(X) = s'(Y) = s'(Z) = \star$, which is consistent with the query $q$.
\qed
\end{proof}

\enesimplecycles*
\begin{proof}
We show that given a simple cycle over the nodes $V=\{0,\ldots,n-1\}$
and a query $q: Q \to M$, the output of Algorithm \ref{alg:cycle-ene}
is YES iff there exists a Nash equilibrium $s^*$ which is consistent
with $q$.

Suppose there exists a Nash equilibrium $s^*$ which is consistent with
$q$. We can argue by induction on $n$ that on termination of
Algorithm \ref{alg:cycle-ene}, for all $i \leq n$, we have $s^*(i) \in
X_i$, which in turn implies that the output of
Algorithm \ref{alg:cycle-ene} is YES. Since $s^*$ is consistent with
$q$, we have $s^*(0) = q(0)$ and by line 1 of
Algorithm \ref{alg:cycle-ene}, $X_0 =
\{s^*(0)\}$. Assume that we have $s^*(i) \in X_i$ and consider the
iteration of the loop in line 2 of Algorithm \ref{alg:cycle-ene} for
$i \oplus 1$.  We have the following cases.

If $X_i \subseteq B_{i \oplus 1}$ then $s^*(i) \in B_{i \oplus 1}$,
by the definition of $B_{i \oplus 1}$ and the fact that $s^*$ is a
Nash equilibrium, we have that $s^*(i \oplus 1) = s^*(i)$.
This is because $s^*(i)$ strictly dominates any other strategy
choice for node $i \oplus 1$.
By line 6 in
Algorithm \ref{alg:cycle-ene}, we have $X_{i \oplus 1} = X_i$ and therefore, $s^*(i
  \oplus 1) \in X_{i \oplus 1}$.

If $X_i \not\subseteq B_{i \oplus 1}$ and $s^*(i \oplus 1) \in
  A_{i \oplus 1}$ then by line 4 of Algorithm \ref{alg:cycle-ene}, we
  have $s^*(i \oplus 1)
\in X_{i \oplus 1}$. If $s^*(i \oplus 1) \not\in A_{i \oplus 1}$,
then since $s^*$ is a Nash equilibrium, $s^*(i \oplus 1) = s^*(i)$ and
$s^*(i \oplus 1) \in C_{i \oplus 1}$ (otherwise node $i \oplus 1$ would have a
profitable deviation to a strategy in $A_{i \oplus 1}$). Therefore, by
line 4 of Algorithm \ref{alg:cycle-ene}, we have $s^*(i \oplus 1) \in
X_i \cap C_{i\oplus 1} \subseteq X_{i \oplus 1}$.

Conversely, suppose the output of Algorithm \ref{alg:cycle-ene} is
YES. From the definition, this implies that for all $i \in V$,
$X_i \neq \emptyset$ and for all $j \in Q$: $q(j) \in X_j$ (in fact,
$X_j =\{q(j)\}$). We define a Nash equilibrium $s^*$ as
follows. First, let $s^*(0)=q(0)$.  Next we assign values to $s^*(i)$
starting at $i = n-1$ and going down to $i=1$ as described below.
\begin{itemize}
\itemsep0em 
\item If $i \in Q$ then $s^*(i)=q(i)$.
\item If $i \not\in Q$ and $X_{i} \subseteq B_{i\oplus 1}$
  then by Algorithm \ref{alg:cycle-ene} we have $X_{i \oplus 1} = X_{i}$. Let $s^*(i) = s^*(i \oplus 1)$.
\item Assume $i \not\in Q$ and $X_{i} \not\subseteq
  B_{i \oplus 1}$. If $s^*(i\oplus 1) \in X_{i} \cap C_{i \oplus 1}$ set
  $s^*(i)=s^*(i\oplus 1)$. Otherwise $s^*(i\oplus 1) \in A_{i\oplus 1}$ and we set
  $s^*(i)$ to any element in $X_{i} \setminus B_{i\oplus 1}$. 
\end{itemize}

It is straightforward to verify that for the joint strategy $s^*$
defined as above, for all $i \in V$, $s^*(i) \in X_i$. We now argue
that $s^*$ is a Nash equilibrium. Suppose not, then there exists $j
\in V$ and a strategy $x \in C(j)$ such that $p_j(x, s^*_{-j}) >
p_j(s^*)$.
 We have the following cases.

\smallskip
\noindent{\em Case $j \not\in Q$.}
If $X_{j \ominus 1} \subseteq B_{j}$ then by the
  definition of $s^*$, we have $s^*(j) = s^*(j\ominus 1)$ and so $x \neq s^*(j\ominus 1)$. By the
  definition of $B_{j}$ and $X_{j \ominus 1}$, we have that for all strategies
  $y \in C(j)$: $\beta(j, s^*(j)) + \wgtone{j} - 1 \geq \beta(j, y)$. 
  Now, we have that $p_{j}(s^*) = \beta(j, s^*(j)) + \wgtone{j} \geq 
  \beta(j, x) + 1 > p_j(x, s^*_{-j})$ which is a contradiction.

If $X_{j \ominus 1} \not\subseteq B_j$ and $s^*(j) \in X_{j \ominus 1}
\cap C_j$ then by the definition of $s^*$ we have $s^*(j) =
s^*(j \ominus 1)$, and so $x \neq s^*(j\ominus1)$. By the definition of $C_j$,
we have that $p_j(s^*) = \beta(j,s^*(j)) + \wgtone{j}
\geq \beta(j, x) = p_j(x, s^*_{-j})$; a contradiction.

If $X_{j \ominus 1} \not\subseteq B_j$ and $s^*(j) \not\in X_{j \ominus
  1} \cap C_j$ then by the definition of $s^*$, we have $s^*(j) \in A_j$ and
$s^*(j \ominus 1) \not\in B_{j}$. %
From the former, $\beta(j,s^*(j)) \geq \beta(j,y)$ for all strategies $y$.
From the latter, it follows that $\beta(j, x) \leq \beta(j, s^*(j)) + \wgtone{j}$, 
because all bonuses are integers. 
Thus $p_j(s^*) = \beta(j, s^*(j)) \geq p_j(y, s^*_{-j})$ for all $y \in C(j)$; a contradiction.

\smallskip
\noindent{\em Case $j \in Q$.}
Consider the value of $X_j$ in line 7 of
  Algorithm \ref{alg:cycle-ene} during the iteration when $i =
  j \ominus 1$. Since the output of Algorithm \ref{alg:cycle-ene} is
  assumed to be YES, we have that $q(j) \in X_j$. Now applying a
  similar case analysis as above, we can argue that node $j$ does not
  have a profitable deviation from $s^*$. 
\qed
\end{proof}

\anesimplecycles*
\begin{proof}
We first show that in the special case of unweighted simple cycles, for which \NE always exists, 
Algorithm \ref{alg:cycle-ane} solves the \ane problem for
unweighted simple cycles in $\mathcal{O}(|G|)$ time.
In other words, we argue that given an unweighted simple cycle over
the nodes $V=\{0,\ldots,n-1\}$ and a query $q: Q \to C$, the output of
Algorithm \ref{alg:cycle-ane} is YES iff for every Nash equilibrium $s^*$, the joint
strategy $s^*$ is consistent with $q$.

Suppose there exists a Nash equilibrium $s^*$ which is not consistent
with $q$. For the sake of simplicity, assume that $s^*(0)=q(0)$ and let
$j$ be the minimal index such that $s^*(j) \neq q(j)$. By induction we
argue that if Algorithm \ref{alg:cycle-ane} does not terminate with an output NO before
the iteration with $i= j \ominus 1$, then for all $k$ such that $0 \leq k \leq
j$: $s^*(k) \in X_k$. Since $s^*(0) = q(0)$, line 1 of Algorithm
\ref{alg:cycle-ane} implies $X_0 = \{s^*(0)\}$. Assume that we have $s^*(i) \in X_i$ (for $i <
j$) and consider the iteration of the loop in line 2 of Algorithm \ref{alg:cycle-ane}
for $i \oplus 1$.  We have the following cases.

If $X_i \subseteq B_{i \oplus 1}$ then $s^*(i) \in B_{i \oplus 1}$, by
the definition of $B_{i \oplus 1}$ and the fact that $s^*$ is a Nash
equilibrium, we have that $s^*(i \oplus 1) = s^*(i)$. 
This is because $s^*(i)$ strictly dominates any other strategy choice for node $i \oplus 1$.
By line 6 in
Algorithm \ref{alg:cycle-ane}, we have $X_{i \oplus 1} = X_i$ and therefore, $s^*({i
  \oplus 1}) \in X_{i \oplus 1}$.

If $X_i \not\subseteq B_{i \oplus 1}$ and $s^*(i \oplus 1) \in A_{i
  \oplus 1}$ then by line 4 of Algorithm \ref{alg:cycle-ane}, we have $s^*(i \oplus 1)
\in X_{i \oplus 1}$. If $s^*(i \oplus 1) \not\in A_{i \oplus 1}$,
then since $s^*$ is a Nash equilibrium, $s^*(i \oplus 1) = s^*(i)$ and
$s^*(i \oplus 1) \in C_{i \oplus 1}$ (otherwise node $i \oplus 1$ has a
profitable deviation to a strategy in $A_{i \oplus 1}$). Therefore, by
line 4 of Algorithm \ref{alg:cycle-ane}, we have $s^*(i \oplus 1) \in
X_i \cap C_{i\oplus 1} \subseteq X_{i \oplus 1}$.

Now consider the iteration of Algorithm \ref{alg:cycle-ane} when $i =
j \ominus 1$. By the above argument, $s^*(j) \in X_j$ and by
assumption $s^*(j) \neq q(j)$. Therefore, the condition on line 8 is
satisfied and the output of the algorithm is NO.

Conversely, suppose the output of Algorithm \ref{alg:cycle-ane} is NO. Let $j$ be the
index such that the algorithm terminates with $i=j$. This implies that
$j \in Q$ and $\{q(j)\} \neq X_j$. Note that by definition, $X_j \neq
\emptyset$. Define a partial joint strategy $s_1$ on the nodes
$\{0,\ldots,j\}$ inductively as follows. Let $s_1(0)=q(0)$ and $s_1(j)
\in X_j \setminus \{q(j)\}$. We define $s_1(i)$ starting at $i = j - 1$
going down to $i = 1$ as follows.
\begin{itemize}
\item If $i \in Q$ then $s_1(i)=q(i)$.
\item If $i \not\in Q$ and $X_{i} \subseteq B_{i \oplus 1}$
  then by Algorithm \ref{alg:cycle-ane} we have $X_{i} = X_{i \oplus 1}$. Let $s_1(i) = s_1(i \oplus 1)$.
\item Assume $i \not\in Q$ and $X_{i} \not\subseteq
  B_{i \oplus 1}$. If $s_1(i\oplus 1) \in X_{i} \cap C_{i \oplus 1}$ we set
  $s_1(i)=s_1(i\oplus 1)$. Otherwise we have that $s_1(i\oplus 1) \in A_{i\oplus 1}$ and we set
  $s_1(i)$ to any element in $X_{i} \setminus B_{i\oplus 1}$.
\end{itemize}

We can then extend $s_1$ to a joint strategy $s_2$ by allowing nodes
$j+1, j+2, \ldots n-1$ to switch, in this order, to their best response strategies.
Note that this is well defined since the best response
of a node $i$ depends only on the strategy of its unique predecessor
$i \ominus 1$ on the cycle. If $s_2$ is a Nash equilibrium then we
have a joint strategy which is not consistent with $q$. If $s_2$ is
not a Nash equilibrium then we can argue that node 0 is not playing
its best response in $s_2$. Let node 0 switch to its best response,
denoted by $x$. By definition of $s_1$: $x \neq q(0)$. Now by applying the
best response improvement to each node successively in the order
$1,2,\ldots, n-1$ we can show it is possible to construct a joint
strategy $s_3$ which is a Nash equilibrium in which
$s_3(0)=x$. Details of this construction can be found
in \cite[Lemma 6]{ASW15}. Thus it follows that $s_3$ is a Nash
equilibrium which is not consistent with $q$. 

Finally, we show that Algorithm \ref{alg:cycle-ane-weighted} solves the \ane problem in $\calO(m|G|)$ time for
weighted simple cycles.
First, suppose there exists a Nash equilibrium $s^*$ which is not consistent
with a \ane query $q : Q \to M$. Consider the iteration of the main loop of Algorithm \ref{alg:cycle-ane-weighted} for $c = s^*(0)$.
Note that Algorithm \ref{alg:cycle-ene} for $q'(0) = s^*(0)$ would return YES, because $s^*$ is consistent with $q'$.
From the proof of Theorem \ref{thm:ene-simple-cycles} we know that, for every $i\in V$, the set $X_{i}$ this algorithm computes is equal to the set of colours node $i$ can have in any \NE consistent with $q'$. Note that there has to be $i^* \in Q$ such that $q(i^*) \neq s^*(i^*) \in X(i^*)$, because $s^*$ is not consistent with $q$. Thus Algorithm \ref{alg:cycle-ane-weighted} returns NO, because $X_{i^*} \neq \{q(i^*)\}$.

Conversely, suppose Algorithm \ref{alg:cycle-ane-weighted} returns NO for a query $q$. Then, there exists $i^* \in Q$ for which $X_{i^*}$ computed by Algorithm \ref{alg:cycle-ane-weighted} is $\neq \{q(i^*)\}$. Let us pick any $x \in X_{i^*} \setminus \{q(i^*)\}$.
Based on the interpretation of the set $X(i^*)$, there exists a \NE $s^*$ such that $s^*(i^*) = x$.
Such $s^*$ would not be consistent with $q$, which concludes the proof.
\qed
\end{proof}

\acyc*
\begin{proof}
Suppose there is such a sequence. 
From the definition of $\succ_s$ there exist players $i_1,\ldots,i_k$ 
such that $\{x_j,x_{j+1}\} \subseteq C(i_j)$ and $s(i_j) = x_j$ for all $j=1,\ldots,k$ 
(where we identify $x_{k+1}$ with $x_1$).
For a  joint strategy $s$ and colour $c \in M$, 
let $\#c(s)$ denote the number of players who chose colour $c$ in $s$,
i.e. $\#c(s) = |\{v \in V | s(v) = c\}|$.
Note that for all $j$ player $i_j$'s payoff in $s$ is
$\#x_j(s) - 1$ and
switching to $x_{j+1}$ would give him payoff $\#x_{j+1}(s)$.
Therefore, 
$\#x_j(s) - 1 \geq \#x_{j+1}(s)$, because otherwise $s$
would not be a \NE.
However, this implies $\#x_1(s) - k \geq \#x_1(s)$; a contradiction. \qed
\end{proof}

\acycorder*
\begin{proof}
Let $\succ$ be an acyclic relation on a finite set $S$ and $k = |S|$. 
The directed graph defined by $G = (S,\succ\nobreak)$ is a DAG, because
$\succ$ is acyclic.  Therefore we can topologically sort all the
elements in $S$ into a sequence $x_1,\ldots,x_k$ in such a way that
$x_i \succ x_j$ implies $i \leq j$. Notice that a relation $\succ^*$
defined as $x_i \succ^* x_j$ iff $i \leq j$ is a total order on
$S$.\qed
\end{proof}

\lemordertwone*
\begin{proof}
Suppose that $\SP(\succeq^*_s)(i) \neq s(i)$ for some player $i$.
This means $s(i) \neq \max_{\succeq^*_s} C(i)$, so there exists $x \in C(i)$ such
that $x \succeq^*_s s(i)$ and $x \neq s(i)$. However, $\{x,s(i)\} \subseteq C(i)$
implies that $s(i) \succ x$ and so also $s(i) \succeq^*_s x$ should hold; 
a contradiction with the fact that $\succeq^*_s$, as a total order, is antisymmetric.
\qed
\end{proof}

\begin{example}
\label{ex:many-NEs}
Let the set of colours $M$ be $\{1,\ldots,m\}$ and consider a clique
consisting of $(m-1)m/2$ players.
For every $x,y \in M$ such that $x < y$ there is exactly one player in this clique whose available
colours are $x$ and $y$ only. It is easy to see that for the total order $\succeq$
defined as $m \succeq m-1 \succeq \ldots \succeq 1$ the number of
players choosing colour $m$ in $\SP(\succeq)$ is $m-1$, which is the
maximum possible.  It can be verified that in $\SP(\succeq)$, all the
players who picked colour $x$ receive a payoff of $x-2$, each colour
gives a different payoff and no player can improve his payoff. It
follows that $\SP(\succeq)$ is a Nash equilibrium. 
If we consider any other total order on
$M$, it will result in a permutation of this sequence of payoffs.
Because all of these numbers are different, no
two joint strategies induced by two different total orders are the
same. %
\end{example}

\enedagonelemma*
\begin{proof}
Let $\theta=(i_1,\ldots,i_n)$ be the topological ordering of $V$
chosen in line 1 of Algorithm \ref{alg:out1e}. We show that for all
$j: 1 \leq j \leq n$, for all $c \in X(i_j)$, there exists a Nash
equilibrium $s^*$ such that $s^*(i_j)=c$ and for all $k \neq j$,
$s^*(i_k) \in X(i_k)$. For $i \in V$, let $A_i =\{c \in C(i) \mid \beta(i,c) \geq
\beta(i,c')$ for all $c' \in C(i)\}$ be the set of colours available to player $i$
with the maximum bonus.

Let $i_j$ be a node such that $N_{i_j}= \emptyset$. In the iteration
of the algorithm which considers node $i_j$, we have $Y'=\emptyset$ 
every colour which does not belong to $A(i_{j})$ is removed in line 10.
Thus
$X(i_j)=A(i_j)$.  Let $D=\{i_k \in V \mid N_{i_k} = \emptyset\}$.
Consider the partial joint strategy $s':D \to C$ defined as
$s'({i_j})=c$ and for $i_k \in D$ such that $i_k \neq i_j$ let
$s'(i_k) \in A(i_k)$. Now $s'$ can be extended to a joint strategy
$s^*:V\to C$ by successively making each node (according to the
ordering $\theta$) choose its best response. Since $G$ is a DAG, it
easily follows that $s^*$ is a Nash equilibrium and 
for all $i_k$, $s^*(i_k) \in X(i_k)$.

Now consider a node $i_m$ such that $N_{i_m} \neq \emptyset$ and let
$c \in X(i_m)$. Let $D=\{i_m\} \cup \{i_j \in V \mid \text{ there is a
path from $i_j$ to $i_m$ in $G$}\}$. By definition, for all $i_j$ in
$D$, we have $j \leq m$ (according to the ordering $\theta$). Consider
the partial joint strategy $s': D \to C$ defined inductively as
follows. Let $s'(i_m)=c$. Suppose that $s'(i_j)$ is already defined for some $i_j \in D$,
then for each $i_k \in N_{i_j}$ we do the following.
If $s'(i_j) \in X(i_k)$ then let
$s'(i_k)=s'(i_j)$. If $s'(i_j) \not\in X(i_k)$, then consider the
iteration of Algorithm \ref{alg:out1e}, which adds $s'(i_j)$ to
$X(i_j)$. If $X(i_k)$ is removed from $Y'$ in line 10 because of
colour $c'$ then let $s'(i_k) = c'$.
Otherwise, if the corresponding maximum bipartite matching in line 12
matches $X(i_k)$ with $(c',x)$, then define $s'(i_k)=c'$.
Since the out-degree of $G$ is at most 1, $s'(i_k)$ is assigned a value exactly once
and so $s'$ is a valid function.

By definition of $s'$, for all $i_j \in D$, $s'(i_j) \in
X(i_j)$. Given a node $j \in D$, a partial joint strategy $s: D \to
C$ and $c \in C(j)$, let $N_{j}(s,c)=\{k \in N_{j} \mid s(k)=c\}$.
We now argue that for all $i_j \in D$, $s'(i_j)$ is a best response
for node $i_j$ to $s'_{-i_j}$. 

Suppose $N_{i_j}=\emptyset$. Since
$s'(i_j) \in A(i_j)$, it follows that $s'(i_j)$ is a best response to
$s'_{-i_j}$. 
Now suppose $N_{i_j} \neq \emptyset$ and $s'(i_j)$ is not a best response
to $s'_{-i_j}$. Then there exists a $c' \in C(i_j)$ such that
$p_{i_j}(c',s'_{-i_j}) > p_{i_j}(s')$. This implies that
$|N_{i_j}(s',c')| + \beta(i_j, c') > |N_{i_j}(s', s'(i_j))| + \beta(i_j, s'(i_j))$.
Consider the iteration
of Algorithm \ref{alg:out1e} in which $s'(i_j)$ is added to
$X(i_j)$. By the definition of $s'$ and Algorithm \ref{alg:out1e}, in
this iteration, $|S| = |N_{i_j}(s', s'(i_j))|$. 
If $c'$ is removed from $C'$ in line 10 then
we would have $p_{i_j}(s') = |N_{i_j}(s', s'(i_j))| + \beta(i_j, s'(i_j)) =
|S| + \beta(i_j, s'(i_j)) \geq |Y'| + \beta(i_j, c') \geq |N_{i_j}(s',c')| + \beta(i_j, c') = p_{i_j}(s')$; a contradiction. 
Therefore for the bipartite graph $G'$ constructed in line 11 we need to have $N_{i_j}(s',c') \in Y'$.
Notice that every node in $Y'$ is matched with some other node in line 12, 
because the size of the matching is $|Y'|$.
Again, by the definition of $s'$, for all nodes $i_k \in N_{i_j}(s',c')$
the node %
$X(i_k)$ would need to be matched with $(c',x)$ for some $1 \leq x \leq |S| + \beta(i_j, s'(i_j)) - \beta(i_j, c')$. But this is impossible, because
$|N_{i_j}(s',c')| > |S| + \beta(i_j, s'(i_j)) - \beta(i_j, c')$,
thereby contradicting the assumption that
$s'(i_j) \in X(i_j)$.

As in the earlier case, $s'$ can now be extended to a joint strategy
$s^*:V \to C$. First, for all $j \in \{i \mid N_i
= \emptyset\} \setminus D$, set $s'(j)\in A(j)$. Then successively
make each node according to the ordering $\theta$ choose its best
response. The resulting joint strategy $s^*$ is a Nash equilibrium.
\qed
\end{proof}

\enedagone*
\begin{proof}
We show that given an unweighted DAG $G=(V,E)$ with out-degree at most 1 and a
query $q$, Algorithm \ref{alg:out1e} returns YES iff there
exists a Nash equilibrium $s^*$ which is consistent with $q$.

Suppose Algorithm \ref{alg:out1e} returns YES. Then from the
definition, for all $i \in V$, $X_i \neq \emptyset$ and for all $j \in
P$, $X_j=\{q(j)\}$. By Lemma~\ref{lm:DAG-out1-ene} it follows that
there exists a Nash equilibrium $s^*$ which is consistent with $q$.

Conversely, suppose there exists a Nash equilibrium $s^*$ which is
consistent with $q$. Let $\theta=(i_1,\ldots,i_n)$ be the topological
ordering of $V$ chosen in line 1 of Algorithm \ref{alg:out1e}. We
argue that for all $j \in \{1,\ldots,n\}$, $s^*(i_j) \in X(i_j)$. The
claim follows easily for $i_1$. Consider a node $i_m$ and suppose for
all $j < m$, $s^*(i_j) \in X(i_j)$. For $c \in C$, let $N_{i_m}(s^*,c)
=\{i_k \in N_{i_m} \mid s^*(i_k) =c\}$. Since $s^*$ is a Nash
equilibrium, $s^*(i_m)$ is a best reponse to the choices made by all
nodes $i_k \in N_{i_m}$. This implies that for all $c \neq s^*_{i_m}$,
$|N_{i_m}(s^*,c)| + \beta(i_{j}, c) \leq |N_{i_m}(s^*,s^*_{i_m})| + \beta(i_{j}, s^*_{i_m})$.
This condition guarantees that Algorithm \ref{alg:out1e} will
find a matching of size $|Y'|$ for $G'$ defined in line 11 and thus
$s^*(i_m) \in X(i_m)$.

The computational complexity of Algorithm \ref{alg:out1e}
mainly depends on the maximum matching algorithm in bipartite graphs
used at line 11.
There are several such algorithms, each with a different computational complexity.
We can use the standard Hopcroft-Karp algorithm which has complexity $\calO(E\sqrt{V})$ where
$E$ is the number of edges and $V$ is the number of nodes in a given bipartite graph.
For $l = 1,\ldots,n$, let $Y_l$ denote the value of $Y$ at line 4 of Algorithm \ref{alg:out1e} for $j := l$. 
Note that all of these sets are disjoint, because each node is a predecessor of
at most one other node.
Let $f(x)$ be the function that returns the maximum running time of one iteration of the loop
between lines 6-13 for a set $Y$ of size $x$.
Note that for any $j$, this loop is executed at most once for each colour.
Consider the function $g(x) := f(x) - f(0)$; note that $g(x) = \calO(f(x))$.
It is easy to see that $g(x)$ is increasing, convex (the complexity of the matching problem is at least linear in $x$), 
and $g(0)$ = 0.
We will show that such defined $g$ is superadditive, i.e.~$g(a) + g(b) \leq g(a+b)$ for any $a,b \geq 0$.
\begin{lemma}
Any convex, increasing function $h$ such that $h(0) = 0$ is superadditive.
\end{lemma}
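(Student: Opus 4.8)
The plan is to derive superadditivity directly from convexity alone; the monotonicity hypothesis will not even be needed. First I would dispose of the degenerate case: if $a = 0$ then $h(a) + h(b) = h(0) + h(b) = h(b) = h(a+b)$, and symmetrically if $b = 0$, so from now on I may assume $a, b > 0$ and set $\lambda := a/(a+b) \in (0,1)$, noting $1-\lambda = b/(a+b)$.

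Next I would apply the convexity inequality twice, using $0$ and $a+b$ as the two base points. Writing $a = \lambda\,(a+b) + (1-\lambda)\cdot 0$, convexity gives $h(a) \le \lambda\, h(a+b) + (1-\lambda)\, h(0) = \lambda\, h(a+b)$, where the last equality is where the hypothesis $h(0) = 0$ enters. Symmetrically, writing $b = (1-\lambda)\,(a+b) + \lambda\cdot 0$, convexity gives $h(b) \le (1-\lambda)\, h(a+b)$. Adding these two inequalities yields $h(a) + h(b) \le \bigl(\lambda + (1-\lambda)\bigr)\, h(a+b) = h(a+b)$, which is exactly the claimed superadditivity.

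There is essentially no obstacle here; the only points that need a word of care are that the convex combinations formed above genuinely lie in the domain of $h$ (they do, since each is a combination of $0$ and $a+b$, both in $[0,\infty)$ where $h$ is assumed convex), and that the endpoint at $0$ contributes nothing — the sole use of $h(0) = 0$. In the surrounding argument this lemma is then invoked with $h = g$, where $g(x) = f(x) - f(0)$ is, by construction, increasing, convex, and satisfies $g(0) = 0$, so that $g(a) + g(b) \le g(a+b)$ for all $a, b \ge 0$, as required to bound the total running time.
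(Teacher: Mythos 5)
Your proof is correct and is essentially the same argument as the paper's: the paper phrases it geometrically via the chord $q(x) = x\,h(a+b)/(a+b)$ lying above the graph on $[0,a+b]$, which is exactly your two applications of the convexity inequality with base points $0$ and $a+b$. Your version is slightly tidier in that it handles the degenerate case $a+b=0$ explicitly (the paper divides by $a+b$ without comment) and correctly observes that monotonicity is not needed.
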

\begin{proof}
Consider any $a,b \geq 0$ and the linear function $q(x) := x \cdot h(a+b)/(a+b)$,
which is the line connecting the $(0,0)$ and $(a+b,h(a+b))$ points on the curve defined by function $h$.
As $h$ is convex we have that any point along $q(x)$ for $x \in [0,a+b]$
is at least as high as $h(x)$. In particular, $q(a) \geq h(a)$ and $q(b) \geq h(b)$.
At the same time $h(a+b) = q(a) + q(b)$, which concludes the proof. \qed
\end{proof}

We now have that the total running time of Algorithm \ref{alg:out1e} is
$\calO(m \cdot \sum_{j=1}^{n} f(|Y_j|)) = 
\calO(m\cdot \sum_{j=1}^{n} (g(|Y_j|) + f(0))) =
\calO(m\cdot \sum_{j=1}^{n} g(|Y_j|)) = 
\calO(m\cdot g(\sum_{j=1}^{n} |Y_j|)) 
= \calO(m \cdot f(|V|))$, where the third equality holds because $g$ is superadditive.
As a result, it suffices to estimate the matching time for a graph $G'$ with 
$|Y'| = |V|$.
Due to lines 9-10 we have $|S| + \beta(i_j, c) - \beta(i_j, c') \leq |Y'|$ $=$ $\calO(n)$, 
For such a $Y'$, the bipartite graph $G'$ at line 11 would have $n + nm = \calO(nm)$ nodes and $n \cdot nm = n^2m$ edges. The complexity of one matching for such a $G'$ is $\calO(n^{2}m\sqrt{nm})$.
This implies that the total running time of Algorithm \ref{alg:out1e} is $\calO((nm)^{5/2}) = \calO(|G|^{5/2})$, because $|G| = \calO(nm)$.
\qed
\end{proof}

\anedagone*
\begin{proof}[sketch]
The proof is similar to that of Theorem~\ref{thm:ene-dag-out1}. Let
$\theta=(i_1,\ldots,i_n)$ be the topological ordering of $V$ chosen in
line 1 of Algorithm~\ref{alg:out1a}. First, we need the following lemma
which essentially follows from the proof of Lemma \ref{lm:DAG-out1-ene}.

\begin{lemma}
\label{lm:DAG-out1-ane}
For $j \in \{1,\ldots, n\}$, in the iteration of
Algorithm~\ref{alg:out1a} for the node $i_j$, consider the set
$X(i_j)$ computed in line 3 (lines 3-13 in the long version).
For all $c \in X(i_j)$, there exists a
Nash equilibrium $s^*$ such that $s^*(i_j)=c$ and for all $k \leq j$,
$s^*(i_k) \in X(i_k)$.
\end{lemma}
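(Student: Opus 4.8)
The plan is to recognise that Lemma~\ref{lm:DAG-out1-ane} is, in essence, Lemma~\ref{lm:DAG-out1-ene} applied to the game $\mathcal{G}$ with the \emph{empty} query, so that its proof can be re-used almost verbatim. The set $X(i_j)$ that Algorithm~\ref{alg:out1a} attaches to each node $i_j$ is produced purely by the matching-based inner computation of Algorithm~\ref{alg:out1e} (the domination removals of lines~7--10 and the maximum-matching test of line~12), and that computation never consults $q$; the only query-sensitive step of Algorithm~\ref{alg:out1e} is the substitution $X(i_j):=\{q(i_j)\}$ at query nodes, which Algorithm~\ref{alg:out1a} simply omits. Hence, run on a query with $Q=\emptyset$, Algorithm~\ref{alg:out1e} computes exactly the sets $X(i_j)$ of Algorithm~\ref{alg:out1a} and, having no query node at which to abort, returns YES, so Lemma~\ref{lm:DAG-out1-ene} becomes directly applicable.

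In detail: fix $j$ and $c\in X(i_j)$, and let $D$ be $i_j$ together with all nodes from which $G$ has a path to $i_j$; since each such node precedes $i_j$ in the topological order $\theta$, we have $D\subseteq\{i_1,\dots,i_j\}$. Because $G$ has out-degree at most one, following the algorithm's choices backwards defines a partial profile $s'\colon D\to M$ exactly as in the proof of Lemma~\ref{lm:DAG-out1-ene}: set $s'(i_j)=c$, and whenever $s'(i)$ has been fixed for $i\in D$ assign to each predecessor $i_k$ of $i$ either the colour $s'(i)$ (if $s'(i)\in X(i_k)$), or the colour $c'$ for which $X(i_k)$ was dropped from $Y'$ at line~10, or the colour $c'$ to which the line-12 matching assigns $X(i_k)$, in the iteration that placed $s'(i)$ into $X(i)$; out-degree $\le 1$ makes this single-valued. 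The same computation as in that proof --- using only the capacity bounds left by lines~7--10 and the fact that the line-12 matching has size $|Y'|$ --- shows every node of $D$ best-responds to its predecessors under $s'$. Extend $s'$ to a full profile $s^*$ by giving each source node outside $D$ a colour of maximum bonus and then letting every remaining node, in the order $\theta$, pick a best response; since $G$ is acyclic, $s^*$ is a Nash equilibrium with $s^*(i_j)=c$.

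It remains to obtain $s^*(i_k)\in X(i_k)$ for \emph{all} $k\le j$, not merely for the ancestors of $i_j$ lying in $D$. For this I would invoke the converse half of the proof of Theorem~\ref{thm:ene-dag-out1}, read with $Q=\emptyset$ so that its sets $X(\cdot)$ are exactly those of Algorithm~\ref{alg:out1a}: that argument shows, for any Nash equilibrium $t$ and any node $i_m$, that if $t(i_k)\in X(i_k)$ for every predecessor $i_k$ of $i_m$ then the best-response inequalities at $i_m$ force a matching of size $|Y'|$ in the graph $G'$ of line~11, whence $t(i_m)\in X(i_m)$; an induction along $\theta$ then gives $t(i_m)\in X(i_m)$ for all $m$. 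Applying this with $t=s^*$ completes the lemma, and the same two ingredients plug straight into the template of the proof of Theorem~\ref{thm:ene-dag-out1} to give Theorem~\ref{thm:ane-dag-out1} (the only extra work being a per-node test of $X(i_j)=\{q(i_j)\}$, i.e.\ $\calO(nm)=\calO(|G|)$ in total, with each $X(i_j)$ nonempty because a DAG always admits a Nash equilibrium). The one genuinely delicate point --- the main obstacle --- is to confirm that deleting Algorithm~\ref{alg:out1e}'s query-restriction step does no harm: the argument in the proof of Lemma~\ref{lm:DAG-out1-ene} uses the sets $X(\cdot)$ only through the defining line-7 and line-12 conditions and the subset relations among them, all of which are insensitive to that restriction; dropping it merely enlarges some $X(k)$, which can only make the line-12 matchings easier to realise and never weakens a best-response inequality. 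One should also retain the bookkeeping convention, as in that proof, that $Y$ is indexed by the predecessors of the current node rather than being a plain set of colour-sets.
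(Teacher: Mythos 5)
Your proposal is correct and follows essentially the same route as the paper, which itself only says the lemma ``essentially follows from the proof of Lemma~\ref{lm:DAG-out1-ene}'': you reuse the construction of the partial profile $s'$ on the ancestor set $D$ and its best-response extension, after observing that the sets $X(i_j)$ of Algorithm~\ref{alg:out1a} coincide with those of Algorithm~\ref{alg:out1e} run on the empty query. Your additional step of invoking the converse induction from the proof of Theorem~\ref{thm:ene-dag-out1} to secure $s^*(i_k)\in X(i_k)$ for all $k\le j$ is a sound (and welcome) way of discharging a detail the paper leaves implicit.
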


Suppose the output of Algorithm~\ref{alg:out1a} is NO. Let $i_j$ be
the node which is being processed when the algorithm outputs NO. This
implies that $i_j \in Q$ and there exists a $c \neq q(i_j)$ such that
$c \in X(i_j)$. From Lemma \ref{lm:DAG-out1-ane} there exists a Nash equilibrium $s^*$
such that $s^*(i_j)=c$. Thus $s^*$ is a Nash equilibrium which is not
consistent with $q$.

Suppose there is a Nash equilibrium $s^*$ which is not consistent with
$q$. Let $i_j$ be the first node in the ordering $\theta$ such that
$s^*(i_j) \neq q(i_j)$. We can argue that for all $k <j$,
$s^*(i_k) \in X(i_k)$ and in the iteration of
Algorithm~\ref{alg:out1a} for the node $i_j$, at line 3, $s^*(i_j) \in
X(i_j)$. This implies that the condition on line 4 is satisfied and
the algorithm outputs NO. The bound on the running time follows
from the analysis given in the proof of
Theorem~\ref{thm:ene-dag-out1}. \qed
\end{proof}

\end{document}